\documentclass[lettersize,journal]{IEEEtran}
\usepackage{algorithmic}
\usepackage{xcolor}
\def\BibTeX{{\rm B\kern-.05em{\sc i\kern-.025em b}\kern-.08em
    T\kern-.1667em\lower.7ex\hbox{E}\kern-.125emX}}

\usepackage{amsbsy,amsmath,amssymb,amsfonts,amsthm}
\usepackage{bm}
\usepackage{gensymb}
\usepackage{graphicx,wrapfig}
\usepackage{multirow}
\usepackage{booktabs}
\usepackage{mathrsfs}
\usepackage{color, soul}
\usepackage[hidelinks]{hyperref}
\usepackage{cite}
\usepackage[shortlabels]{enumitem}
\usepackage{epsfig}
\usepackage{epigraph}
\usepackage{csquotes}
\usepackage[strict]{changepage}
\usepackage{datetime2}
\usepackage{siunitx}
\usepackage{tabularx}
\usepackage{acronym}
\usepackage{orcidlink}

\usepackage{import}
\usepackage{xifthen}
\usepackage{pdfpages}
\usepackage{transparent}
\usepackage{pstricks}
\usepackage[inkscapearea=page]{svg}
\svgpath{{./graphics/}}

\newcommand{%
    \def\svgwidth{\columnwidth}
    \import{./graphics/}{.pdf_tex}
}[1]{%
    \def\svgwidth{\columnwidth}
    \import{./graphics/}{#1.pdf_tex}
}

\usepackage{etoolbox} 
\usepackage[scale=.75]{miama}
\usepackage[subdued, LGRgreek, italic]{mathastext}
\MTgreekfont{fmm}\Mathastext[greekfont]


\usepackage{float}
\usepackage{tikz}
\usepackage{pgfplots}
\DeclareUnicodeCharacter{2212}{−}
\usepgfplotslibrary{groupplots,dateplot}
\usetikzlibrary{patterns,shapes.arrows,shapes.geometric,arrows,pgfplots.groupplots,positioning,arrows.meta}
\pgfplotsset{compat=newest}

\renewcommand{\rm}[1]{\mathrm{#1}} 
\renewcommand{\bf}[1]{\mathbf{#1}} 
\newcommand{\uli}[1]{\underline{#1}} 
\newcommand{\oli}[1]{\overline{#1}} 
\newcommand{\uarrow}{\,\uparrow} 
\newcommand{\darrow}{\,\downarrow} 
\newcommand{\nSet}{\mathcal{N}} 
\newcommand{\gSet}{\mathcal{G}} 
\newcommand{\dSet}{\mathcal{D}} 
\newcommand{\eSet}{\mathcal{E}} 
\newcommand{\tSet}{\mathcal{T}} 
\newcommand{\mSet}{\mathcal{M}} 
\newcommand{\CE}{{\scriptscriptstyle\mathrm{CE}}} 
\newcommand{\sz}{{\scriptscriptstyle 0}} 
\newcommand{\KCL}{{\scriptscriptstyle\rm{KCL}}} 
\newcommand{\KVL}{{\scriptscriptstyle\rm{KVL}}} 
\newcommand{\PR}{{\scriptscriptstyle\rm{PR}}} 
\newcommand{\SD}{{\scriptscriptstyle\rm{SD}}} 
\newcommand{\SU}{{\scriptscriptstyle\rm{SU}}} 
\newcommand{\params}{\text{\MTversion{greekfont}$\Omega$}} 
\robustify{\params}
\newcommand{\hatparams}{\params_{\CE}} 
\newcommand{\tildeparams}{\params_{\sz}} 
\newcommand{\gAsk}{g'} 
\newcommand{\bfp}{{\mathbf{p}}} 
\renewcommand{\ac}{\bm{\alpha}} 
\newcommand{\cSet}{\mathcal{C}} 
\newcommand{\intSet}{\mathcal{U}} 
\newcommand{\dataSet}{\mathcal{S}} 
\newcommand{\gen}{{\scriptscriptstyle\mathrm{G}}} 
\newcommand{\dem}{{\scriptscriptstyle\mathrm{D}}} 
\renewcommand{\line}{{\scriptscriptstyle\mathrm{L}}} 
\newcommand{\norm}[1]{\left\lVert#1\right\rVert}
\DeclareMathOperator*{\argmin}{arg\,min}
\DeclareMathOperator*{\argmax}{arg\,max}

\newcommand{\tp}[1]{{#1}^{\mathsf{T}}}
\newcommand{\knn}{{k\rm{NN1}}}
\newcommand{\reals}{{\mathbb{R}}}

\newcommand{\pb}[4][black]{%
  \parbox{#3cm}{%
    \ifthenelse{\equal{#2}{l}}{\raggedright}{%
    \ifthenelse{\equal{#2}{c}}{\centering}{%
    \ifthenelse{\equal{#2}{r}}{\raggedleft}{}}}%
    \textcolor{#1}{#4}%
  }%
}

\newtheorem{assumption}{Assumption}
\newtheorem{proposition}{Proposition}


\newcolumntype{H}{>{\setbox0=\hbox\bgroup}c<{\egroup}@{}}

\usepackage{hyperref}

\begin{document}

\title{Counterfactual Explanations for Power System Optimisation}

\author{Benjamin Fritz\,\orcidlink{0009-0006-5326-8299},~\IEEEmembership{Student Member,~IEEE,} Waqquas Bukhsh\,\orcidlink{0000-0002-5765-0747},~\IEEEmembership{Senior Member,~IEEE}\vspace{-6mm}}



\maketitle

\begin{abstract}
  Enhanced computational capabilities of modern decision-making software have allowed us to solve increasingly sophisticated optimisation problems. But in complex socio-economic, technical environments such as electricity markets, transparent operation is key to ensure a fair treatment of all parties involved, particularly regarding dispatch decisions. We address this issue by building on the concept of counterfactual explanations, answering questions such as ``\textit{Why was this generator not dispatched?}'' by identifying minimum changes in the input parameters that would have changed the optimal solution. Both DC Optimal Power Flow and Unit Commitment problems are considered, wherein the variable parameters are the spatial and temporal demand profiles, respectively. The thereby obtained explanations allow users to identify the most important differences between the real and expected market outcomes and observe which constraints have led to the solution. The framework uses a bilevel optimisation problem to find the counterfactual demand scenarios. State-of-the-art methods are compared with data-driven heuristics on the basis of computational efficiency and explanation accuracy. Results show that leveraging historical data from previously solved instances can provide significant speed benefits and allows us to derive explanations in cases where conventional methods would not be tractable. 
\end{abstract}

\begin{IEEEkeywords}
Bilevel programming, counterfactual explanations, electricity markets, power system optimisation
\end{IEEEkeywords}

\section{Introduction}\label{sec:intro}


\IEEEPARstart{O}{ptimisation} software is at the core of solving many operational, planning and maintenance problems arising in modern power systems, where they introduce improved capabilities of enhanced decision-making and thereby facilitate significant economic savings.
But while computational advances have profoundly improved the speed of state-of-the-art solvers, and have expanded their applications in power systems, much less attention has been paid to \textit{explaining} the outputs of such tools. 

This becomes particularly crucial in the face of the ongoing democratisation and decentralisation of electricity systems, making them inherently more complex to interact with. In
Great Britain (GB), the system operator now accepts bids from assets with capacities below 1~MW in the \textit{Balancing Mechanism} (BM)~\cite{nationalenergysystemoperatorIncreasingFlexibilitySmallscale2024}, aiming to boost flexibility and cost-efficiency by leveraging the growing share of distributed energy resources (DER). From an optimisation perspective, this adds decision variables and expands the dimensionality of the decision space, an effect further magnified by the increasing coupling of power, heat and transport sectors.
Meanwhile, advances in solver efficiency have enabled harder problem formulations involving integer variables, non-linear constraints and multiple objectives, adding further complexity.

Despite offering improved modelling capabilities, such sophisticated models undermine a crucial aspect of multi-stakeholder decision-making: the explainability of black-box optimisation outputs. Explainability in machine learning (ML) has gained increasing attention with the rise of explainable artificial intelligence (XAI). 
Most of the works in the field of \textit{Explainable Optimisation} have emerged only very recently, mostly focusing on interpretability, i.e., deriving an easily comprehensible set of rules mapping instances to solutions. Tree-based methodologies (e.g., \cite{bertsimasVoiceOptimization2020,goerigkFrameworkInherentlyInterpretable2023}) build on the notion that decision-trees as inherently interpretable models. In this vein -- but with direct application to power system optimisation -- references \cite{stratigakosInterpretableMachineLearning2024} and \cite{lumbrerasExplainingSolutionsUnit2024} employ decision trees to introduce interpretability to the DC Optimal Power Flow and Unit Commitment problems, respectively.
While data-driven interpretability techniques can provide useful insights, they are unlikely to fully replace conventional optimisation tools in power systems. Instead, \emph{explainability} methods aim to reveal the reasons for individual model outcomes, thereby allowing the justification of single optimal decisions. In GB, the debate on ``\textit{skip-rates}'' has increased pressure on the operator to improve transparency in the BM~\cite{lcpdeltaBMSkipRates2024}. A ``\textit{skip}'' occurs when a seemingly uneconomical balancing action is chosen over an in-merit alternative. Stakeholders requesting reasons for such decisions would thus directly benefit from improved explainability.

Overall, we find that there is a gap of approaches aiming to explain individual solutions to power system optimisation problems.
We address this by proposing a framework based on counterfactual explanations (CEs) that builds on preliminary findings from a previous study \cite{fritzExplainableDCOptimal2025}, but we extend it to the UC problem, generalise the methodology, and derive supporting theoretical principles. The remainder of this paper is organised as follows. Section \ref{sec:XOpt} introduces the concept CEs, followed by a discussion of how CEs can be employed in a power system context in Section \ref{sec:CE4OPF}. We present two illustrative case studies for DCOPF and UC instances respectively, exemplifying the functioning of our approach.
Section \ref{sec:bilo} defines methods to efficiently compute CEs from bilevel problems. The experimental setup used to evaluate the performance of the proposed methodology and the accompanying results are presented in Section \ref{sec:res}. Section \ref{sec:conclusion} concludes the paper.

\subsection*{Notation}
Bold lowercase letters (e.g., $\mathbf{x}$) represent vectors, bold uppercase letters (e.g., $\mathbf{X}$) represent matrices. A transposed vector is denoted by $\bf{x}^{\mathsf{T}}$. Ordinary letters, either lowercase or uppercase, (e.g., $n$ or $N$) refer to scalars and functions. Sets are denoted by calligraphic letters, e.g. $\mathcal{X}$ and their cardinality is given by $|\mathcal{X}|$. $\emptyset$ is the empty set. A vector $\mathbf{x}=\{0,1\}^N$ is a vector of length $N$ of zeros and ones. $[X]$ is shorthand for $\{1,\ldots,X\}$. Finally, $\left\{\bf{x},\bf{y}\right\}_{i=1}^N$ is a dataset $\left\{\left(\bf{x}^{[1]},\bf{y}^{[1]}\right),\ldots,\left(\bf{x}^{[N]},\bf{y}^{[N]}\right)\right\}$.

\vspace{0mm}

\section{Counterfactual Explanations for Optimisation}\label{sec:XOpt}
Counterfactual explanations (CEs) fall into the realm of local, example-based explainability. Here, ``\emph{local}'' refers to the idea that CEs explain a single instance of the model's output with respect to (w.r.t.) a given input -- as opposed to \emph{global} explanations, which aim to characterise the overall decision behaviour across the entire input space.
In this way, CEs help users to identify cause-effect relations between the model inputs and outputs, allow exploring desired ``what-if'' scenarios, and are advocated for by studies in social and cognitive sciences 
\cite{guidottiCounterfactualExplanationsHow2024}. 

\subsection{General Concept}

While the concept of CEs can be applied analogously for ML models, this paper studies their application in the area of mathematical optimisation. We consider the following generic optimisation problem:
\begin{equation}\label{eq:fw_general}
    \begin{aligned}
        \mathbf{x}^*\quad\in\quad&\argmin_{\mathbf{x}} \quad & & f(\mathbf{x}, \params)\\
        &\text{s.t.} 	\quad & & g_i(\mathbf{x},\params)\geq {0}\quad,\quad\forall i\in\cSet\quad,
    \end{aligned}
\end{equation}
where the solution $\mathbf{x}^*$ is the minimiser of an objective function $f$ and $\params$ is a set of constant input parameters (e.g., for a linear optimisation problem this set may be $\params=\{\mathbf{A},\mathbf{b},\mathbf{c}\}$). The problem is subject to a set of constraints ${g}_i$ collected in the set $\mathcal{C}$. 
Now consider a situation in which an optimal solution $\mathbf{x}_{\sz}^*$ deviates from the user's intuitive expectation of what a typical solution for a given parameter set $\params_{\sz}$ should look like. A CE then answers the following question:
    ``\textit{How can the input parameters $\params$ be minimally perturbed, such that the new solution falls into the region in which the user expected it to be?}''
In revealing this information, the CE helps the user to better understand the impact of the parameters on the solution and identifies the key parameter changes that characterise the difference between the expectation and the actual decision.

\subsection{Bilevel Problem Formulation}

The nearest counterfactual explanation can be found from a bilevel optimisation problem which is solved after the result of problem \eqref{eq:fw_general} is available. 
Here, the goal is to find the parameter set $\hatparams$ minimally perturbed from $\tildeparams$, subject to the user's definition of a \textit{solution region} ${\mathcal{X}}$ in which they would have expected to find the solution (this may even be a single point). 
Note that, generally, ${\mathcal{X}}\cap\mathbf{x}_{\sz}^*=\emptyset$, since otherwise the trivial solution $\hatparams=\tildeparams$ would be obtained. The general CE problem then reads
\begin{subequations}\label{eq:ce_general}
    \begin{alignat}{3}
        \mkern-10mu(\hatparams\,,\,\mathbf{x}_{\CE}^*)\mkern5mu\in\mkern5mu&\argmin_{\params\,, \,\mathbf{x}} \mkern5mu & & \norm{\params-\tildeparams}_1\label{seq:ce_general_obj}\\
        & \mkern5mu\text{s.t.} &  & \mathbf{x} \in \argmin_{\mathbf{y}\,:\,{g}_i(\mathbf{y},\params)\geq 0,\forall i\in\cSet} \mkern5mu f(\mathbf{y}, \params)\label{seq:ce_general_opt_constr}\\
        & & & \mathbf{x} \in {\mathcal{X}}\label{seq:ce_general_sol_req}\\
        & & & \params\in\mathcal{M}_{\params}\mkern5mu\label{seq:ce_general_mutable},
    \end{alignat}
\end{subequations}
where $\norm{\cdot}_1$ is the $\ell_1$ distance measurement (which, with slight abuse of notation, is here assumed to be taken element-wise for each entry of each subset within $\params$). In theory, other distances measurements can be employed at this stage. We opt for the $\ell_1$-norm due to its ability to maintain a linear objective term and promote sparse explanations. 
Constraint \eqref{seq:ce_general_mutable} is added to limit the range in which the parameters $\params$ are allowed to vary. By defining the \textit{mutable parameter set} $\mathcal{M}_{\params}$, we may limit the number or nature of mutable input parameters. 


\section{Explaining Generator Dispatch Decisions}\label{sec:CE4OPF}

Thus far only a generic optimisation of shape \eqref{eq:fw_general} was considered. The following sections introduce how to generate CEs for solutions to two well-studied power system optimisation problems -- the DC Optimal Power Flow (DCOPF) and the Unit Commitment (UC) problem. Arguably, both of these problems lead to counterintuitive market outcomes (e.g., a more expensive generator was chosen over a cheaper unit), and thus motivate the usage of CEs. 

\subsection{DC Optimal Power Flow Formulation}\label{ssec:dcopf}
We consider a transmission network where $\nSet$, $\eSet$, $\gSet$ and $\dSet$ denote the sets for nodes, lines between nodes, generators and demands, respectively. The corresponding cost-minimising DCOPF problem is given as
\vspace{-0mm}
\begin{subequations}\label{eq:dcopf}
    \begin{alignat}{3}
        &\min_{\mathbf{p}^{\gen}}	& & \quad\tp{\mathbf{c}}\bfp^{\gen}	&\label{seq:dcopf_obj}\\
        &\text{s.t.} 		& 			& \mkern-9mu\sum_{\{n,m\}\in\eSet_n} p^{\line}_{n,m} = \sum_{g\in\gSet_n}p^{\gen}_g - \sum_{d\in\dSet_n}p^{\dem}_d \mkern5mu,\mkern5mu\forall n \in \nSet \quad(\bm{\lambda}^{\KCL})& \label{seq:dcopf_kcl_constr}\\
        & 					& 			& p^{\line}_{n,m} = (\delta_n-\delta_m) b_{n,m} \mkern5mu,\mkern5mu\forall \{n,m\}\in\eSet \quad(\bm{\lambda}^{\KVL})&\label{seq:dcopf_kvl_constr}\\
        & 					& 			& -\oli{\bfp}^{\line} \leq \bfp^{\line} \leq \oli{\bfp}^{\line} \quad(\uli{\bm{\lambda}}^{\line},\oli{\bm{\lambda}}^{\line})& \label{seq:dcopf_f-line_constr}\\
        & 					& 			& \uli{\bfp}^{\gen} \leq \bfp^{\gen} \leq \oli{\bfp}^{\gen} \quad(\uli{\bm{\lambda}}^{\gen},\oli{\bm{\lambda}}^{\gen})\quad,& \label{seq:dcopf_p-gen_constr}
    \end{alignat}
\end{subequations}
where
$\mathbf{c}\in\reals_{\geq0}^{|\gSet|}$ is the incremental generation cost, $\bfp^{\gen}\in\mathbb{R}_{\geq0}^{|\gSet|}$ and $\bfp^{\dem}\in\mathbb{R}_{\geq0}^{|\dSet|}$ are the active power injected/consumed by generators and demands respectively and $\bfp^{\line}\in\reals^{|\eSet|}$ contains the active power flows on all the lines in the network which are bounded by upper thermal limits $\oli{\bfp}^{\line}\in\reals^{|\eSet|}_{\geq0}$. $b_{n,m}\in\reals$ is the susceptance of the line connecting nodes $n$ and $m$, $\bm{\delta}\in\reals^{|\nSet|}$ is the vector of nodal voltage angles and $\uli{\bfp}^{\gen}$ and $\oli{\bfp}^{\gen}$ denote the lower and upper generator limits.
The associated dual variables are shown in brackets besides the respective constraints.

\subsection{Unit Commitment Formulation}\label{ssec:uc}
The UC formulation implemented in this paper implements a reduced version of \cite{morales-espanaTightCompactMILP2013}.
We aim to minimise the total operating cost over a time horizon $\tSet$:
\begin{subequations}
    \begin{alignat}{3}
    \min_{\bf{P}^{\gen}, \bf{U}}	& \, & \sum_{t\in\tSet}\Bigl[\sum_{g\in\gSet}\underbrace{c_{g1}^{\PR}p^{\gen}_{g,t} + c_{g0}^{\PR}u_{g,t}}_{\rm{(i)}} + \underbrace{c_g^{\SU}u_{g,t}^{\SU}}_{\rm{(ii)}} + \underbrace{c_g^{\SD}u_{g,t}^{\SD}}_{\rm{(iii)}} \Bigr].
    &\label{seq:uc_obj}
    \end{alignat}
	Here, (i) is the production cost, while (ii) and (iii) are the start-up and shut-down costs of unit $g$, respectively.
	The continuous dispatch variables $p_{g,t}^{\gen}\in\bf{P}^{\gen}\in\mathbb{R}^{|\gSet|\times|\tSet|}$ and the binary commitment variables $u_{g,t}\in\bf{U}\in\{0,1\}^{|\gSet|\times|\tSet|}$ span the decision space. The start-up and shut-down variables $u_{g,t}^{\SU}, u_{g,t}^{\SD}$ are directly implied by the choice of $\bf{U}$, see constraint \eqref{seq:uc_logic_constr}. For later usage we define $\bfp^{\dem}=\bigl[\sum_{d\in\dSet}p^{\dem}_{d,1},\ldots,\sum_{d\in\dSet}p^{\dem}_{d,|\tSet|}\bigr]$. Thus note that while $\bfp^{\dem}$ spanned along the nodal dimension for the DCOPF, it here stands for the net demand profile in every time step.

	The following constraints are imposed onto the problem.

    \emph{Power balance constraint}: Since transmission constraints are neglected in the UC formulation used in this work, individual nodal power balances do not need to be satisfied. Merely the overall system power needs to match.
    \begin{equation}
        \sum_{g\in\gSet}p^{\gen}_{g,t} = \sum_{d\in\dSet}p^{\dem}_{d,t}\,\left(=P^{\dem}_t\right)\qquad\forall t\in\tSet \label{seq:uc_p_balance_constr}
    \end{equation}

    \emph{Generator power limits}: Enforce on/off coupling between $u_{g,t}$ and $p_{g,t}$, and on-state generator limits.
    \begin{equation}
        u_{g,t}\uli{p}^{\gen}_g \leq p^{\gen}_{g,t} \leq u_{g,t}\oli{p}^{\gen}_g \qquad\forall t\in\tSet,\,\,\forall g\in\gSet \label{seq:uc_pG_min_max}
    \end{equation}

    \textit{Ramping limits}: The generator ramp rates are bounded by upper $R_{g}^{\uarrow}\in\mathbb{R}_{\geq0}$ and lower $R_{g}^{\darrow}\in\mathbb{R}_{\geq0}$ ramping limits.
    \begin{equation}\label{seq:uc_ramping}
        -R_{g}^{\darrow} \leq p_{g,t}^{\gen}-p_{g,t-1}^{\gen} \leq R_{g}^{\uarrow} \qquad\forall t\in\tSet,\,\,\forall g\in\gSet
    \end{equation}

	\textit{Logic constraint}: Ensures the assignment of the correct binary variables to $u_{g,t}^{\SU}$ and $u_{g,t}^{\SD}$.
    \begin{equation}
        u_{g,t}-u_{g,t-1} = u_{g,t}^{\SU} - u_{g,t}^{\SD} \qquad\forall t\in\tSet,\,\,\forall g\in\gSet\label{seq:uc_logic_constr}
    \end{equation}

	\textit{Minimum up- and downtime}: The minimum number of periods that the unit must be online $T^{\rm{U}}_g\in\mathbb{R}_{\geq0}$ and offline $T^{\rm{D}}_g\in\mathbb{R}_{\geq0}$ are guaranteed by the constraint below. The constraint further removes the possibility of a unit simultaneously starting up and shutting down as it dominates over the less strict constraint $u_{g,t}^{\SU}+u_{g,t}^{\SD}\leq1$. 
    \begin{equation}
        \sum_{\tau=t-T^{\rm{U}}_g+1}^t u_{g,\tau}^{\SU} \leq u_{g,t} \qquad\forall t\in\tSet,\,\,\forall g\in\gSet\label{seq:min_up_constr}
    \end{equation}\\[-6mm]
    \begin{equation}
        \sum_{\tau=t-T^{\rm{D}}_g+1}^t u_{g,\tau}^{\SD} \leq 1 - u_{g,t} \qquad\forall t\in\tSet,\,\,\forall g\in\gSet\label{seq:min_down_constr}
    \end{equation}
\end{subequations}

\subsection{Which question are we trying to find an explanation to?}

The definition of the solution region $\mathcal{X}$ depends on the nature of the question being asked. In power systems, such questions could be posed by the market regulator, asking the operator why a certain dispatch decision was made. In GB, this situation has led the operator to maintain a regularly updated \textit{Dispatch Transparency} dataset, containing all actions taken in the BM for a given time interval, the reasons for them and alternative actions that were dismissed~\cite{nationalenergysystemoperatorDispatchTransparency2024}. 
Motivated by this insight, we make the two following assumptions:

\begin{assumption}[Dispatch-restricting solution region]\label{as:X_DCOPF_UC}
The solution region for the DCOPF is merely restricted by the following condition:
\begin{equation}\label{eq:X_DCOPF}
\mathcal{X}_{\rm{DCOPF}}:= \{p^{\gen}_{\gAsk} \geq P_{\gAsk}\}\,\cap\,\mathbb{R}_{\geq0}^{|\gSet|-1},
\end{equation}
translating to the question of why generator $\gAsk$ had not been dispatched above a specified minimum threshold $P_{\gAsk}\in\reals_{\geq0}$, while all other dispatch values are free to vary.

For the UC problem, the solution region is defined as
\begin{equation}\label{eq:X_UC}
\mathcal{X}_{\rm{UC}}:= \{u_{\gAsk\mkern-1mu t'} =1\}\,\cap\,\{0,1\}^{(|\gSet|-1)\times|\tSet|},
\end{equation}
i.e., asking why unit $\gAsk$ was not committed at all during the time period $t'$.
\end{assumption}

\begin{assumption}[Solution Region Feasibility]\label{as:sol_reg_feas}
The solution region $\mathcal{X}$ is defined such that it contains at least one point which is feasible w.r.t. the original problem \eqref{eq:fw_general}, and, more restrictively, feasible w.r.t. problem \eqref{eq:ce_general}, i.e., there exist a parameter variation for which at least one solution in $\mathcal{X}$ becomes optimal. 
\end{assumption}

If assumption \ref{as:sol_reg_feas} is dropped, one may still identify minimal parameter changes that would make problems \eqref{eq:fw_general} and \eqref{eq:ce_general} \textit{feasible} by computing a Minimal Intractable System (MIS)~\cite{chinneckFeasibilityInfeasibilityOptimization2008}. Alternative ways of defining the solution region are theoretically possible. For instance, one could ask how parameters would need to be minimally modified to lower the flow across a congested line. Given that problem \eqref{eq:fw_general} is convex, one may even formulate $\mathcal{X}$ as to impose constraints on dual variables, allowing for questions such as ``Why is the \textit{price} at this bus higher than \qty{30}{\$/MWh} today?''. 

\subsection{What should the explanation look like?}

The structure of the CE is also influenced by the user's definition of the mutable parameter set $\mSet_{\params}$. It generally makes sense to ask oneself which problem parameters introduce the variability that ultimately leads to different solutions. The DCOPF and UC problems are both used in operational contexts, and thus require to be solved multiple times a day. If we would like to know why today we obtain a different optimal solution than yesterday, the main sources of variability are demand, renewable energy availability, and generator bids. This paper focuses on the former, and hence, makes the final following assumption:
\begin{assumption}[Demand-based parameter variation]\label{as:mutable}
The mutable parameter set is defined as
\begin{equation}
	\mSet_{\params} = \left\{\params\,:\,\bfp^{\dem}\in\mathbb{R}_{\geq0}^{|\dSet|} \,\wedge\,\params\backslash\{\bfp^{\dem}\}=\params_{\sz}\backslash\{\bfp^{\dem}_{\sz}\}\right\}\,,
\end{equation}
i.e., demand parameters can vary freely, while all other parameters are fixed to their factual values in $\params_{\sz}$. The objective function \eqref{seq:ce_general_obj} thus reduces to $\norm{\bfp^{\dem}-\bfp^{\dem}_{\sz}}$.
\end{assumption}

CEs defined by other parameter variations (such as those based on generator cost bids) are theoretically possible and would -- at least for the DCOPF -- not require a solution approach different from those presented in the following section. While the network structure itself is assumed to be fixed, users modelling long-term (stochastic) power system planning problems may also be interested in varying parameters such as line characteristics or perturbing the input scenarios.

\subsection{Illustrative Examples}

\subsubsection{DC Optimal Power Flow}

The DCOPF as described in \eqref{eq:dcopf}, is a linear problem (LP). LPs are often perceived as the simplest among optimisation problems, yet we argue that the impact of DCOPF network constraints can yield unintuitive solutions -- even for small instances. For this, we study the following example, which was already introduced in \cite{fritzExplainableDCOptimal2025}. 

Consider the network shown in Figure \ref{fig:case5}, which is based on the \texttt{case5\_pjm\_\hspace{-0.2mm}\_api} found in \cite{BenchmarksOptimalPower2021}. For simplicity, assume that $(\uli{p}^{\gen},\oli{p}^{\gen})=(0,500)\, [\unit{MW}]$ are the output bounds for all generators and all lines have a limit of $\oli{p}^{\line}=\qty{240}{MW}$. The susceptance and cost values are unchanged from the original data, where the latter is given as $\mathbf{c}=(14,15,30,40,10)\,[\unit{\$/MW}]$. Running the DCOPF problem \eqref{eq:dcopf} using $\bfp^{\dem}=(300,480,140,600,20)\,[\unit{MW}]$ as the ``factual'' demand scenario, we get the optimal dispatch $\bfp^{\gen *}=(500,306.2,485.9,14.1,233.8)\,[\unit{MW}]$. Evidently, this dispatch does not follow the merit-order. The dual variables indicate that four lines are congested (as shown in the left part of Figure \ref{fig:case5}), with the maximum value $\uli{\lambda}^{\line}_{4,5} = \qty{53.2}{\$/MW}$. At this point it is not directly evident which congestion led to the out-of-merit dispatch, or even less, how the demand values impacted the optimal solution.

Since $g_5$ is the cheapest unit, however, not fully dispatched, the user asks why it was not dispatched at least, say 400\,MW. The unique, minimal demand perturbation that satisfies this condition, is described by the variations shown in the right part of Figure \ref{fig:case5}, and consists solely of an increase of $p_{d_3}^{\dem}$ from 140\,MW to 190\,MW. 
Although $g_5$ is increased to the specified 400\,MW threshold, the remaining generators need to adjust correspondingly. Interestingly, the demand variation alleviates congestion on line \{1,4\} rather than \{4,5\}, despite the latter exhibiting the highest dual value. Through this hypothetical scenario, the CE establishes a causal link between the demand at $d_3$ and the optimal dispatch of $g_5$.
\begin{figure}[t]
        \centering
        \vspace{-1mm}
		\def\svgwidth{0.9\linewidth}
		\import{graphics}{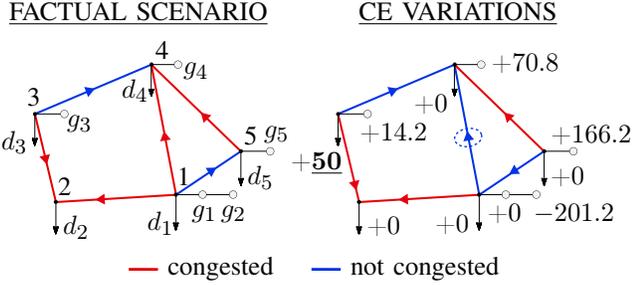}
		\vspace{0mm}
		\caption{5-bus toy network (all units in MW). In the factual scenario (left), $g_5$ -- the cheapest unit -- is not fully dispatched. The user is interested in why $g_5$ was not dispatched at least 400\,MW. The counterfactual scenario (right) shows the minimum nodal demand variations that satisfy this solution requirement.}
		\label{fig:case5}
		\vspace{-5mm}
\end{figure}

\subsubsection{Unit Commitment}

For the UC problem, factors such as ramping limits, minimum up- and downtime constraints, or a unit's high minimum stable generation limit can lead to out-of-merit solutions \cite{bukhshSignificanceTimeConstraints2024}. Further, the combination of different cost factors (incremental, fixed, start-up, shut-down) may prohibit a straightforward economic comparison of the generators. 

Consider the test case defined by the generator data shown in Table \ref{tab:uc_gen_data}. Since no transmission constraints are considered, we focus solely on the aggregated demand profile $\bfp^{\dem}$ as defined in Section \ref{ssec:uc}. It is shown as a solid line in the upper plot in Figure \ref{fig:uc_example}. This factual demand scenario (labelled `0') is supplied by the optimal generator dispatch depicted by the solid lines in the lower plot. $g_1$ here takes on the role of the ``baseload unit'', delivering a large proportion of the power due to its low operating cost. 
But, due to its inflexibility, the other two units need to be activated to cover the morning and evening peaks. For scenario 0, unit $g_3$ is dispatched in the morning, while $g_2$ is used in the evening. Overall, $g_3$ has a higher operating cost than $g_2$, meaning that any substitution of $g_2$ by $g_3$, such as in the present scenario, constitutes an out-of-merit decision. The user therefore asks how the demand profile would have had to look like such that $g_2$ could have been used during the morning. For instance, the user could add the requirement $u_{g_2,8:00}=1$ to the solution region $\mathcal{X}$. 

The profiles associated with the minimally perturbed counterfactual scenario are shown as dashed lines in Figure \ref{fig:uc_example} (labelled `CE'). On the demand side, the variation is marginal -- a drop in total load of around 18\,MW at 7:00\,AM. Nevertheless, this is sufficient to ``undo'' the substitution of $g_2$ by $g_3$, resulting in a very different dispatch configuration from 5:00\,AM to 2:00\,PM. It can now be seen that the high demand ramp between 6:00\,AM and 8:00\,AM is what prohibited the activation of unit $g_2$. 
%
\begingroup
\begin{table}[t]
    \caption{Exemplary UC instance - generator data}
    \label{tab:uc_gen_data}
    \vspace{-5mm}
    \begin{center}
	\resizebox{\linewidth}{!}{
		\begin{tabular}{llllllll}
			\toprule
			& $(\uli{p}^{\gen},\oli{p}^{\gen})$ & $c_{1}^{\rm{PR}}$ & $c_{0}^{\rm{PR}}$ & $c^{\rm{SU}}$ & $c^{\rm{SD}}$ & $R^{\uarrow},R^{\darrow}$ & $T^{\rm{U}},T^{\rm{D}}$ \\[.5mm]
            & [MW] & [\$/MWh] & [\$/h] & [\$] & [\$] & [MW/h] & [h]\\\midrule
			$g_1$ & (50\,,\,300) & 10 & 100 & 300 & 0 & 10 & 4\\\midrule
			$g_2$ & (10\,,\,150) & 20 & 100 & 70 & 0 & 40 & 3\\\midrule
			$g_3$ & (2\,,\,100) & 50 & 100 & 70 & 0 & 100 & 2\\
			\bottomrule
			\end{tabular}         
	}
    \end{center}
\end{table}
\endgroup
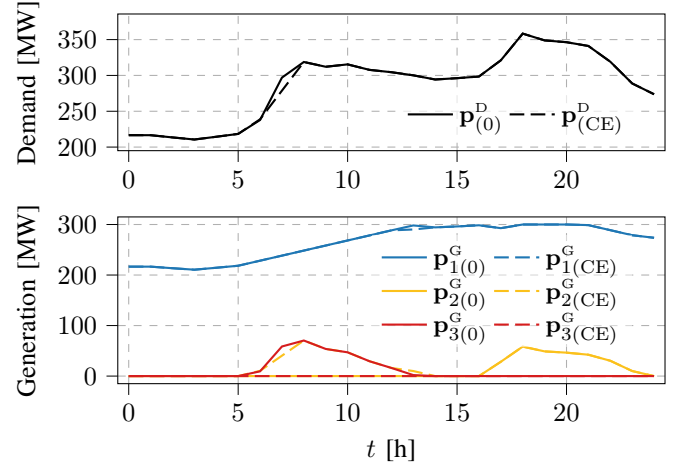
\begin{figure}[t]
    \centering
    \vspace{-4mm}
\begin{tikzpicture}

\definecolor{darkgrey176}{RGB}{176,176,176}
\definecolor{darkorange25512714}{RGB}{251, 193, 28}
\definecolor{forestgreen4416044}{RGB}{44,160,44}
\definecolor{lightgreen152223138}{RGB}{152,223,138}
\definecolor{lightgrey204}{RGB}{204,204,204}
\definecolor{lightsalmon255187120}{RGB}{255,187,120}
\definecolor{lightsteelblue174199232}{RGB}{174,199,232}
\definecolor{steelblue31119180}{RGB}{31,119,180}
\definecolor{G30}{RGB}{214,29,30}
\definecolor{G3A}{RGB}{255,152,150}

\begin{axis}[
at={(0cm,0cm)},
width=\linewidth,
height=3.4cm,
legend cell align={left},
legend style={
    fill opacity=0.7,
    draw opacity=1,
    text opacity=1,
    at={(0.95,0.1)},
    anchor=south east,
    draw=none,
    font=\small
},
legend columns=2,
tick align=outside,
tick pos=left,
x grid style={darkgrey176},
xmajorgrids,
xmin=-0.5, xmax=24.5,
xtick style={color=black},
y grid style={darkgrey176},
ylabel={Demand [MW]},
ymajorgrids,
grid style={dashed},
ymin=195, ymax=385.73858,
ytick style={color=black}
]
\addplot [thick, black]
table {%
0 216.5987
1 216.5987
2 213.4758
3 210.5637
4 214.4183
5 218.3855
6 237.7945
7 296.9609
8 318.7573
9 312.0942
10 315.409
11 307.7544
12 304.4991
13 300.1593
14 294.3144
15 296.1606
16 298.4722
17 321.1017
18 358.3493
19 348.8776
20 346.3711
21 341.0892
22 319.1494
23 288.8034
24 273.7257
};
\addlegendentry{$\bfp^{\dem}_{(0)}$}
\addplot [thick, black, dash pattern=on 5.55pt off 2.4pt]
table {%
0 216.5987
1 216.5987
2 213.4758
3 210.5637
4 214.4183
5 218.385500000669
6 238.757299986671
7 278.7573
8 318.757300001831
9 312.094200002412
10 315.409000002994
11 307.7544
12 304.4991
13 300.1593
14 294.3144
15 296.1606
16 298.4722
17 321.1017
18 358.3493
19 348.8776
20 346.3711
21 341.0892
22 319.1494
23 288.8034
24 273.7257
};
\addlegendentry{$\bfp^{\dem}_{(\rm{CE})}$}
\end{axis}

\begin{axis}[
at={(0cm,-3.1cm)},
width=\linewidth,
height=3.8cm,
legend cell align={left},
legend style={
    fill opacity=0.5,
    draw opacity=1,
    text opacity=1,
    at={(0.465,0.893)},
    anchor=north west,
    draw=none,
    font=\small
},
legend columns=2,
tick align=outside,
tick pos=left,
x grid style={darkgrey176},
xlabel={$t$ [h]},
xmajorgrids,
xmin=-0.5, xmax=24.5,
xtick style={color=black},
y grid style={darkgrey176},
ylabel={Generation [MW]},
ymajorgrids,
grid style={dashed},
ymin=-15, ymax=315,
ytick style={color=black}
]
\addplot [thick, steelblue31119180]
table {%
0 216.5987
1 216.5987
2 213.4758
3 210.5637
4 214.4184
5 218.3855
6 228.3855
7 238.3855
8 248.3855
9 258.3855
10 268.3855
11 278.3855
12 288.3855
13 298.1595
14 294.3144
15 296.1607
16 298.4722
17 292.7523
18 300
19 300
20 300
21 298.8035
22 288.8035
23 278.8035
24 273.7256
};
\addlegendentry{$\bfp^{\gen}_{1(0)}$}
\addplot [thick, steelblue31119180, dash pattern=on 5.55pt off 2.4pt]
table {%
0 216.5987
1 216.5987
2 213.4758
3 210.5637
4 214.4183
5 218.385500000669
6 228.385500000669
7 238.385500000669
8 248.385500000669
9 258.385500000669
10 268.385500000669
11 278.385500000669
12 288.385500000669
13 290.1593
14 294.3144
15 296.1606
16 298.4722
17 292.7524
18 300
19 300
20 300
21 298.8034
22 288.8034
23 278.8034
24 273.7257
};
\addlegendentry{$\bfp^{\gen}_{1(\rm{CE})}$}
\addplot [thick, darkorange25512714]
table {%
0 0
1 0
2 0
3 0
4 0
5 0
6 0
7 0
8 0
9 0
10 0
11 0
12 0
13 0
14 0
15 0
16 0
17 28.3493
18 58.3493
19 48.8775
20 46.3711
21 42.2857
22 30.3458
23 10
24 0
};
\addlegendentry{$\bfp^{\gen}_{2(0)}$}
\addplot [thick, darkorange25512714, dash pattern=on 5.55pt off 2.4pt]
table {%
0 0
1 0
2 0
3 0
4 0
5 0
6 10.3717999860022
7 40.3717999993314
8 70.3718000011625
9 53.7087000017439
10 47.0235000023254
11 29.3688999993313
12 16.1135999993313
13 10
14 0
15 0
16 0
17 28.3493
18 58.3493
19 48.8776
20 46.3711
21 42.2858
22 30.346
23 10
24 0
};
\addlegendentry{$\bfp^{\gen}_{2(\rm{CE})}$}
\addplot [thick, G30]
table {%
0 0
1 0
2 0
3 0
4 0
5 0
6 9.409
7 58.5754
8 70.3717
9 53.7088
10 47.0236
11 29.3687
12 16.1136
13 2
14 0
15 0
16 0
17 0
18 0
19 0
20 0
21 0
22 0
23 0
24 0
};
\addlegendentry{$\bfp^{\gen}_{3(0)}$}
\addplot [thick, G30, dash pattern=on 5.55pt off 2.4pt]
table {%
0 0
1 0
2 0
3 0
4 0
5 0
6 0
7 0
8 0
9 0
10 0
11 0
12 0
13 0
14 0
15 0
16 0
17 0
18 0
19 0
20 0
21 0
22 0
23 0
24 0
};
\addlegendentry{$\bfp^{\gen}_{3(\rm{CE})}$}
\end{axis}

\end{tikzpicture}
    \vspace{-7mm}
    \caption{Demand and associated optimal generation profiles for factual scenario (0) and counterfactual scenario (CE). It reveals that a slower demand ramp between 6:00\,AM and 8:00\,AM would have allowed the commitment of $g_2$ during the morning peak.}
    \label{fig:uc_example}
	\vspace{-5mm}
\end{figure}

\section{Generating CEs via Bilevel Programming}\label{sec:bilo}
In the previous examples, we assumed that we can readily access the counterfactual demand scenario, without explicitly stating how the minimal perturbation was derived. We subsequently propose approaches to generate CEs for the DCOPF and UC problems. In ML, CEs are commonly derived via gradient descent methods due to the absence of constraints \cite{wachterCounterfactualExplanationsOpening2018}. But, as presented in problem formulation \eqref{eq:ce_general}, generating CEs for optimisation problems necessitates a bilevel formulation, where the optimality of the lower-level problem is explicitly enforced. The following sections present common strategies to solve bilevel problems with linear, or mixed-integer linear problems (MILP) at the lower level.
\subsection{Linear Problems}\label{ssec:CE4LP}
	For cases where the lower-level problem is linear, solution techniques commonly resort to reformulating the problem into a single-level one, enforcing the optimality of the lower-level problem by adding optimality constraints \cite{kleinertSurveyMixedIntegerProgramming2021}. 
	These solution techniques generally start from either of the two following reformulations: (i) via the Karush-Kuhn-Tucker (KKT) conditions or (ii) via the strong duality theorem. In this paper, we use KKT conditions, while the reader is referred to \cite{kurtzCounterfactualExplanationsLinear2024} for an approach employing strong duality. Assuming both $f$ and $g$ in \eqref{seq:ce_general_opt_constr} to be continuous and convex in $x$, the KKT-based single-level reformulation of \eqref{eq:ce_general} reads
	\begin{subequations}\label{eq:ce_general_kkt}
		\begin{alignat}{3}
			\mkern-10mu(\hatparams\,,\,\mathbf{x}_{\CE}^*)\mkern5mu\in\mkern5mu&\argmin_{\params\,, \,\mathbf{x}} \mkern-20mu & & \mkern25mu\norm{\params-\tildeparams}_1\\
			& \text{s.t.} & & g_i(\mathbf{x},\params)\geq0 \mkern5mu,\mkern5mu\forall i \in \cSet\\
			& & & \nabla_{\mathbf{x}}f(\mathbf{x})-\sum_{i\in\mathcal{C}}\lambda_i\nabla_{\mathbf{x}}g_i(\mathbf{x},\params)=\bm{0}\label{seq:ce_general_statio}\\
			& & & \lambda_i\cdot g_i(\mathbf{x},\params)=0\,\,,\forall i\in\mathcal{C}\label{seq:ce_general_compl}\\
			& & & \mathbf{x} \in {\mathcal{X}}\label{seq:sol_req}\\
			& & & \params\in\mathcal{M}_{\params}\\
			& & & \mathbf{x},\bm{\lambda} \geq \bm{0}\quad,
		\end{alignat}
	\end{subequations}
    %
	where $\bm{\lambda}$ are the dual variables linked to the constraints in $\cSet$. \eqref{seq:ce_general_statio} and \eqref{seq:ce_general_compl} enforce stationarity and complementarity respectively, and thereby replace the previous optimality constraint \eqref{seq:ce_general_opt_constr}.
	A fundamental difficulty in this formulation is the complementarity constraint \eqref{seq:ce_general_compl} -- containing bilinear terms, due to the multiplication of dual variables $\bm{\lambda}$ with primal variables $\mathbf{x}$ or mutable parameters within $\params$. This turns the problem into a non-convex, non-linear optimisation problem, complicating its solution with off-the-shelf solvers. Note that this issue may be circumvented by specifying a fully fixed solution requirement, i.e., replacing \eqref{seq:sol_req} with $\mathbf{x}=\mathbf{x}^*_{\CE}$. Indeed, the resulting inverse optimisation problem would then also be linear \cite{ahujaInverseOptimization2001}. This would, however, come with a profound loss of generality -- even more for problems with many decision variables. The following paragraphs present commonly used strategies to address the non-linear nature of the above problem.
	\paragraph{Special Ordered Sets} The complementarity conditions within \eqref{seq:ce_general_compl} tell us that for any constraint $i$, both $\lambda_i$ and $g_i(\mathbf{x}, \params)$ cannot simultaneously take non-zero values. This property can be enforced by leveraging so-called \textit{Special Order Sets} of type 1, a special class of constraints that allows only one member of a set of variables to assume a non-zero value. Constraints of this type can directly be defined in off-the-shelf solvers such as \cite{gurobioptimizationGurobiOptimizerReference2024}. This method is later referred to as \texttt{SOS1}.

 	\paragraph{Fortuny-Amat-McCarl Linearisation} Another standard approach is based on the {Fortuny-Amat-McCarl linearisation}, which relies on introducing binary variables, thereby transforming the problem into a Mixed Integer Problem (MIP). 
	It builds on the notion that complementarity constraints can be replaced by ``\textit{big-$M$}'' constraints. The complementarity condition \eqref{seq:ce_general_compl} is hereby replaced by two auxiliary constraints
	\begin{subequations}
		\begin{equation}\label{seq:fortuny1}
		0\leq\lambda_i \leq z_i M_i\,\,,\,\,\forall i\in\cSet
		\end{equation}
		\begin{equation}\label{seq:fortuny2}
		0\leq g_i(\mathbf{x},\params)\leq (1-z_i) M_i\,\,,\,\,\forall i\in\cSet\quad,
		\end{equation}
	\end{subequations}
	where $z_i$ is a binary variable introduced for each inequality constraint. A drawback is that the choice of appropriate values for $M$ is a challenging task in itself \cite{pinedaSolvingLinearBilevel2019}. Chosen too small and the global optimum could be excluded from the feasible region. Conversely, when set too large, numerical instabilities may occur. This paper follows the procedure explained in \cite{pratLearningActiveConstraints2023}. For the primal variables, the $M$ values are directly obtained from the given constraint bounds. Since the dual variables are not explicitly bounded, we assume that we can access a database of previous runs from where we can find the maximum $\lambda_i$ ever recorded. These maximum values are then multiplied with a safety factor of 10 to obtain the respective $M$ values. We henceforth refer to this method as \texttt{MIP}.

    \subsection{Mixed-Integer Linear Problems}\label{ssec:CE4MILP}
	We now study the case where a vector of binary variables $\bf{u}$ is included in both objective function $f$ and constraints $g_i$. In contrast to the bilevel linear problem, the reformulation via KKT conditions or strong duality theorem is rendered invalid in the presence of integer variables. Due to the lack of easily implementable lower-level optimality constraints for MILPs, the range of exact solution algorithms is much sparser than for the linear case. Existing methods to globally solve bilevel MILP problems usually build on the following lower-level inequality formulation (here adapted to immediately fit the CE problem structure):
	\begin{subequations}\label{eq:ce_milp}
		\begin{alignat}{2}
			\min_{\params\,, \,\bf{x}\,,\bf{u}} \mkern5mu & \norm{\params-\tildeparams}_1 &\\
			\text{s.t.} \mkern5mu & \mkern5mu g_i(\bf{x},\bf{u}, \params)\geq0 \mkern5mu,\mkern5mu\forall i \in \cSet\\
			& f(\bf{x},\bf{u}, \params)\leq\min_{\bf{y},\bf{v}} \left.\begin{cases}
				f(\bf{y},\bf{v},\params): \\
				\bf{g}(\bf{y},\bf{v},\params)\geq \bm{0}
			\end{cases} \mkern-20mu\right\}\mkern2mu,\forall\bf{v}\in\intSet(\params)\label{seq:ce_milp_opt_ineq}\\
			& (\bf{x},\bf{u}) \in {\mathcal{X}}\label{seq:ce_milp_sol_req}\\
			& \params\in\mathcal{M}_{\params}\mkern5mu.
		\end{alignat}
	\end{subequations}
	Here, $\intSet(\params)$ is the set of \textit{all possible} combinations of binary decision vectors yielding a feasible solution for a given set of parameters $\params$. Hence, if constraint \eqref{seq:ce_milp_opt_ineq} is satisfied for all vectors $\bf{v}\in\intSet(\params)$, and the respective continuous variables $\bf{y}$ are forced to assume minimising values (e.g., via KKT conditions), then the resulting triplet $(\hatparams\,,\,\bf{x}_{\CE}^*,\bf{u}_{\CE}^*)$ is a global minimiser to \eqref{eq:ce_milp}. 
	
	Clearly, the upper formulation may not be implementable in practice, due to the exponentially large number of combinations within $\intSet$. To address this inherent computational difficulty, we implement the decomposition algorithm proposed in \cite{yueProjectionbasedReformulationDecomposition2019}, which iteratively adds new unique combinations of $\bf{u}$ to $\intSet$. The decomposition algorithm splits the bilevel MILP into a master problem and a subproblem. Within each iteration, the master problem solves \eqref{eq:ce_milp} for the currently minimal parameter variation of $\params$, which are then fixed in the subproblem. The subproblem is then used find new combinations of $\bf{u}$. By doing so, the algorithm successively grows the lower bound of $\norm{\params-\tildeparams}_1$ until a parameter variation $\params$ is found that produces a MILP solution $(\bf{x},\bf{u})$ which inherently satisfies \eqref{seq:ce_milp_sol_req} (at which point the constraint could as well be dropped). Due to space constraints, we do not contain a full outline of the algorithm, for which we refer the reader to the authors' original work. This method is hereafter called \texttt{DECOMP}.

	\subsection{Data-Driven Heuristic Cuts}
	Operators usually solve OPF or UC problems repeatedly, often as frequent as every 15 minutes \cite{zamzamLearningOptimalSolutions2019}. Adding to a wealth of studies ($\mkern-5mu$\cite{zamzamLearningOptimalSolutions2019,chenLearningSolveDCOPF2022,pinedaLearningUnitCommitment2022,xavierLearningSolveLargeScale2019}, to name a few) that have demonstrated performance improvements by using data for both DCOPF and UC problems, this paper aims to leverage historical data to speed up the solution of \eqref{eq:ce_general_kkt} and \eqref{eq:ce_milp}. 
	
	We consider the linear case first. Assume that the lower-level LP has been solved for $S$ times, each time varying the input parameters. A dataset $\dataSet=\{\params,\mathbf{x}^*,\bm{\ac}\}_{j=1}^{S}$ is thereby obtained, where $\bm{\ac}\in\{0,1\}^{|\cSet|}$ indicates whether a constraint is binding or not. While in theory, the number of possible combinations of $\bm{\ac}$ could be exponentially large, it has been shown that for many DCOPF problems, only a few combinations actually occur in practical settings \cite{misraLearningConstrainedOptimization2019}. This also manifests itself in the existence of subsets $\cSet_1$, $\cSet_0\subseteq\cSet$ at whose indices the constraints are always active or inactive, i.e., $\alpha_i^{[j]}=1\,,\,\forall i\in\cSet_1\,,\,\forall j\in[S]$ and $\alpha_i^{[j]}=0\,,\,\forall i\in\cSet_0\,,\,\forall j\in[S]$. We use this observation by adding cuts to the Fortuny-Amat-McCarl linearisation -- fixing $z_i=1$ if $i\in\cSet_1$ or $z_i=0$ if $i\in\cSet_0$. The goal of this method is to reduce the number of binary variable combinations and thereby improve the solution speed. This method is subsequently called \texttt{MIP+cut}.

	For a lower-level MILP the access to dual variables is prohibited due to non-convexity.
	But similarly as for the active constraint indicators, one may also find subsets of the binary variables that remain constant throughout all recorded samples, i.e., $u_i^{[j]}=1\,,\,\forall j\in[S]$ and $u_i^{[j]}=0\,,\,\forall j\in[S]$. In the master problem of the decomposition algorithm, these variables are fixed to either 0 or 1, resulting in cuts on the binary decision space. 
	An additional step to improve the performance is taken by bounding the objective function term $\norm{\params_{\sz}-\params_{\CE}}_1$. This is done by finding the single sample within $\dataSet$ for which $\params$ is minimally distanced from the factual instance, and add a constraint to bound CE objective function from above. The method resulting from these data-driven cuts and bounds for the decomposition algorithm is referred to as \texttt{DECOMP+cut}.

	Unlike other data-assisted optimisation methodologies, the above two methods avoid the sometimes non-trivial process of choosing the right ML model (and training it with a suitable set of hyperparameters). Instead, this approach purely relies on the information contained in the dataset. 
	%
	Problems arise when the solution region $\mathcal{X}$ is specified in a way that the optimal CE solution $\{\params_{\CE}, \mathbf{x}_{\CE}^*\}$ to \eqref{eq:ce_general_kkt} (or $\{\params_{\CE}, \mathbf{x}_{\CE}^*,\bf{u}_{\CE}^*\}$ to \eqref{eq:ce_milp}) would require to produce a new, unseen $\widetilde{\bm{\ac}}\notin\dataSet$ that contradicts $\cSet_1$ or $\cSet_0$, which would inhibit the recovery of the nearest CE parameters. In the worst case, one cannot find any feasible solution, despite its existence. Nevertheless, this problem would prevail for any ML surrogate aimed at approximating the optimal solution policy.

\subsection{Upper bound for CE Objective}

Both for DCOPF and UC CE problems we rely on MILP formulations to implement the aforementioned solution methods. State-of-the-art MILP solvers generally employ branch-and-bound techniques, which can substantially benefit when provided additional objective function bounds. We thus make use of the following valid upper bound.

\begin{proposition}\label{proposition}
	For both DCOPF and UC CE problems, whose objective is to minimise the distance $\norm{\bfp^{\dem}-\bfp^{\dem}_{\sz}}_1$, subject to fulfilling a solution requirement in line with assumption \ref{as:X_DCOPF_UC}, the following inequality holds at optimality
	\begin{equation}
		\norm{\bfp^{\dem}_{\CE}-\bfp^{\dem}_{\sz}}_1\leq\norm{\bfp^{\gen *}_{\CE}-\bfp^{\gen *}_{\sz}}_1.
	\end{equation}
\end{proposition}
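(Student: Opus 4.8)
The plan is to reduce both claims to the relevant power balance identity and then to control the cancellation incurred when passing from a signed sum of injection changes to an $\ell_1$ norm. First I would record the two balance identities for the factual and counterfactual optima. For UC this is immediate from \eqref{seq:uc_p_balance_constr}: at each $t$ the single aggregate demand entry equals total generation, so writing $\Delta_{g,t}:=p^{\gen*}_{\CE,g,t}-p^{\gen*}_{\sz,g,t}$ we have $p^{\dem}_{\CE,t}-p^{\dem}_{\sz,t}=\sum_{g\in\gSet}\Delta_{g,t}$ for every $t\in\tSet$. For DCOPF there is no per-entry identity; instead I would sum the nodal balance \eqref{seq:dcopf_kcl_constr} over all $n\in\nSet$ and use that the KVL relation \eqref{seq:dcopf_kvl_constr} with symmetric susceptance makes each undirected line antisymmetric (it contributes $p^{\line}_{n,m}$ at $n$ and $p^{\line}_{m,n}=-p^{\line}_{n,m}$ at $m$), so all flow terms cancel and only the \emph{global} identity $\sum_{g\in\gSet}p^{\gen}_g=\sum_{d\in\dSet}p^{\dem}_d$ survives.

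The UC case then follows directly: taking absolute values in the per-step identity, applying the triangle inequality on the right, and summing over $\tSet$ gives
\[
\norm{\bfp^{\dem}_{\CE}-\bfp^{\dem}_{\sz}}_1=\sum_{t\in\tSet}\Bigl|\sum_{g\in\gSet}\Delta_{g,t}\Bigr|\le\sum_{t\in\tSet}\sum_{g\in\gSet}\bigl|\Delta_{g,t}\bigr|=\norm{\bfp^{\gen*}_{\CE}-\bfp^{\gen*}_{\sz}}_1,
\]
which is exactly the claim. The essential point is that in UC the objective's demand vector is \emph{already} aggregated per time step, so the signed balance and the $\ell_1$ norm agree entry-wise and no sign cancellation can occur.

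The DCOPF case is where the real work lies, and it is the step I expect to be the main obstacle. Setting $\Delta^{\dem}_{d}:=p^{\dem}_{\CE,d}-p^{\dem}_{\sz,d}$ and $\Delta^{\gen}_{g}:=p^{\gen*}_{\CE,g}-p^{\gen*}_{\sz,g}$, the global identity only yields $\bigl|\sum_{d}\Delta^{\dem}_{d}\bigr|=\bigl|\sum_{g}\Delta^{\gen}_{g}\bigr|\le\norm{\bfp^{\gen*}_{\CE}-\bfp^{\gen*}_{\sz}}_1$, whereas the left-hand side of the proposition is $\sum_{d}|\Delta^{\dem}_{d}|$, which exceeds $\bigl|\sum_{d}\Delta^{\dem}_{d}\bigr|$ whenever the nodal demand changes carry mixed signs. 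To close this gap I would argue via minimality of the counterfactual: since $\bfp^{\dem}_{\CE}$ is the $\ell_1$-minimiser over all feasible counterfactual demands, it suffices to exhibit \emph{one} feasible demand whose $\ell_1$ distance to $\bfp^{\dem}_{\sz}$ is at most $\norm{\bfp^{\gen*}_{\CE}-\bfp^{\gen*}_{\sz}}_1$. A natural candidate redistributes load so that each nodal generation change is matched by a co-located demand change, leaving every line flow and angle at its factual value; by the triangle inequality over nodes this candidate has $\ell_1$ distance at most $\sum_{g}|\Delta^{\gen}_{g}|$ and is clearly network-feasible.

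The delicate part, and the crux of the argument, is certifying that this reconstructed demand actually satisfies Assumption~\ref{as:X_DCOPF_UC}, i.e.\ that the \emph{lower-level optimal} dispatch for it still places $p^{\gen}_{\gAsk}$ above the threshold $P_{\gAsk}$ — co-location guarantees feasibility of the target dispatch but not its cost-optimality. Equivalently, one must rule out that the true $\ell_1$-minimal nodal perturbation spends budget on sign cancellation. I would attack this by showing the minimiser can be taken sign-definite, exploiting the monotone dependence of the cheap unit's dispatch on the net injection at the relevant nodes and treating the active-constraint (congestion) pattern as fixed on the optimal basis; this is the technical heart of the proof and the place where the DCOPF argument genuinely differs from the clean UC case.
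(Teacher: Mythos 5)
Your UC argument is exactly the paper's: the per-period balance \eqref{seq:uc_p_balance_constr} plus the triangle inequality, and it is correct because the demand vector there is already aggregated per time step. The DCOPF half, however, has a genuine gap at precisely the point you flag as the crux. Your outline is logically sound --- exhibit one feasible demand within $\ell_1$ distance $\norm{\bfp^{\gen *}_{\CE}-\bfp^{\gen *}_{\sz}}_1$ of $\bfp^{\dem}_{\sz}$ whose \emph{lower-level optimum} lands in $\mathcal{X}$, then invoke minimality of $\bfp^{\dem}_{\CE}$ --- but the co-located candidate only guarantees that the target dispatch $\bfp^{\gen *}_{\CE}$ is \emph{feasible} for that demand (net injections, hence flows, are unchanged from the factual case), not that it, or any dispatch with $p^{\gen}_{\gAsk}\geq P_{\gAsk}$, is \emph{optimal} for it. The route you sketch for closing this (``the minimiser can be taken sign-definite'', ``treating the active-constraint pattern as fixed'') is not a proof: the optimal basis can change under the perturbation, and no monotonicity of the cheap unit's optimal output in co-located load is established or generally true for congested DCOPF. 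There are also smaller defects in the candidate itself: generator buses need not host a demand, and the requirement $\bfp^{\dem}\geq 0$ from Assumption \ref{as:mutable} can be violated when a generation change is negative and large.

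The paper closes the DCOPF case by a dual argument rather than a primal construction. It fixes the already-computed counterfactual dispatch $\bfp^{\gen *}_{\CE}$, reduces the CE problem to the corresponding inverse-optimisation LP in the demand perturbation (which attains the same optimal value once the solution is fully specified), observes that stationarity of that LP with respect to the demand perturbation forces the KCL multipliers into $[-1,1]$, and then reads the chain $\norm{\bfp^{\dem}_{\CE}-\bfp^{\dem}_{\sz}}_1\leq\sum_{g\in\gSet}\bigl(p^{\gen *}_{\CE,g}-p^{\gen *}_{\sz,g}\bigr)\leq\norm{\bfp^{\gen *}_{\CE}-\bfp^{\gen *}_{\sz}}_1$ off the strong-duality identity, using sign information on the remaining dual terms. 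If you wish to keep your primal route you must actually certify optimality of the reconstructed scenario; as written, that step is missing and the duality argument is what carries the DCOPF case.
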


\begin{proof}\renewcommand{\qedsymbol}{}
Provided in the appendix.
\end{proof}

In other words, the demand variation needed to satisfy a user-defined requirement on the optimal dispatch is at most equal to the resulting variation in the dispatch solution space. This inequality was thus added as an explicit constraint to all of the methods introduced so far.

\section{Numerical Experiments}\label{sec:res}

In a real-world application, the developed explainability framework would be used on an ad-hoc basis, whenever an unanticipated market outcome is suggested by the solver. But to rigorously test the developed methods, we need to analyse the framework's performance across different network topologies, question formulations, and demand scenarios.

\subsection{Experimental Setup}

\subsubsection{Network data}
\textbf{DCOPF}: The network models used to study explanations for DCOPF problems are based on the benchmark systems in \cite{BenchmarksOptimalPower2021}. Specifically, the \textit{Active Power Increase} variants were used (files appended by the \texttt{\_api} suffix), as these are specifically altered to increase the probability of constrained scenarios, and conditions that one would expect to produce unintuitive optimal solutions. 
Generators which could only provide reactive power were removed due to their irrelevance for DCOPF simulations. The 9 considered test cases are summarised in Table \ref{tab:test_cases}. For simplicity, transformers are here included in the set of transmission lines $\eSet$.

\textbf{Unit Commitment}: Not all the test cases in \cite{BenchmarksOptimalPower2021} include UC-specific data. Therefore, complementary generator data from \cite{santosxavierUnitCommitmentjlJuliaJuMP2024} was used. For each generator $g$ in the dataset, the respective production costs ($c_{g1}^{\rm{PR}}$ and $c_{g0}^{\rm{PR}}$ were derived from the production cost curve), start-up and shut-down costs ($c_g^{\rm{SU}}$, $c_g^{\rm{SD}}$), generator limits ($\uli{p}_g$, $\oli{p}_g$), ramping limits ($R_{g}^{\uarrow}$, $R_{g}^{\darrow}$) and minimum up- and downtimes ($T^{\rm{U}}_g$, $T^{\rm{D}}_g$) were adopted. UC problems were solved for the 6 test cases marked in Table \ref{tab:test_cases}, while no UC data was available for the other networks. The numbers of constraints shown in brackets in Table \ref{tab:test_cases} are those obtained when considering UC problems with time horizons of 24 hours.
\begingroup
\setlength{\tabcolsep}{4pt}
{
\renewcommand{\arraystretch}{1.3}
\begin{table}[]
        \caption{Summary of the network characteristics of the test cases used. Data for the UC problem that differed from the DCOPF data 
        are shown in brackets.}
        \label{tab:test_cases}
    \begin{center}
    \vspace{-4mm}
    \resizebox{\linewidth}{!}{
        \begin{tabular}{lHlllHHcc}
            \toprule
            \textbf{Test case} & \textbf{\# buses} & \textbf{\# lines} & \textbf{\# units} & \textbf{\# constraints} & \# regions $|\mathcal{R}|$ & solution time [s] & \textbf{DCOPF} & \textbf{UC} \\
            \midrule\\[-5mm]
            \texttt{case5\_pjm} & 5 & 6 & 5 & \vphantom{28}39 & \vphantom{16}19 & 5.963 & \checkmark & \\
            \cline{1-9}
            \texttt{case14\_ieee} & 14 & 20 & 2 (5) & \vphantom{59}98 (777) & 5 & 6.576 & \checkmark & \checkmark \\
            \cline{1-9}
            \texttt{case24\_ieee\_rts} & 38 & 20 & 32 & \vphantom{165}240 & $>$ \textit{3,874} & timeout (1 hour) & \checkmark & \\
            \cline{1-9}
            \texttt{case30\_ieee} & 30 & 41 & 2 (6) & \vphantom{117}198 (966) & 15 & 11.99  & \checkmark & \checkmark\\
            \cline{1-9}
            \texttt{case39\_epri} & 39 & 46 & 10 & \vphantom{152}243 & $>$ \textit{2,268} & timeout (1 hour) & \checkmark & \\
            \cline{1-9}
            \texttt{case57\_ieee} & 57 & 80 & 4 (7) & 385 (1,155) & $>$ \textit{2,268} & timeout (1 hour) & \checkmark & \checkmark\\
            \cline{1-9}
            \texttt{case89\_pegase} & 89 & 210 & 11 (12) & 951 (2,052) & $>$ \textit{2,268} & timeout (1 hour)  & \checkmark & \checkmark\\
            \cline{1-9}
            \texttt{case118\_ieee} & 118 & 186 & 19 (54) & 900 (9,318) & $>$ \textit{2,268} & timeout (1 hour)  & \checkmark & \checkmark\\
            \cline{1-9}
            \texttt{case300\_ieee} & 300 & 411 & 57 (69) & 2,058 (12,057)\hspace*{-3mm} & $>$ \textit{2,268} & timeout (1 hour)  & \checkmark & \checkmark\\[-.5mm]
            \bottomrule
            \end{tabular}
        }
        \end{center}
\end{table}
}
\endgroup

\subsubsection{Demand variations}
\textbf{DCOPF}: For each test case, $S=5,000$ samples of randomly distributed demand vectors were generated. 
The minimum and maximum bounds for the demand were chosen as 0\% and 120\% of the default values $\widetilde{\bfp}^{\dem}$ provided in \cite{BenchmarksOptimalPower2021}, i.e., $\bfp^{\dem}\sim\mathrm{unif}(\bm{0},1.2\times\widetilde{\bfp}^{\dem})$. Infeasible demand scenarios were discarded, and new demand variations were generated until $5,000$ scenarios with feasible solutions were obtained.

\textbf{Unit Commitment}: For the UC, the number of samples was limited to $S=500$, where each sample contains the 24-hour demand profiles ($\bf{P}^{\dem}$) and the resulting optimal generation schedule ($\bf{P}^{\gen *},\bf{U}^{*}$). This simulates the size of a dataset obtained from a record of around one and a half years, which is a sufficiently long period to capture seasonal effects. At the same time, it is short enough to not expect major network topology changes which would require an adjustment of the model. Each 24-hour profile is produced by following the procedure detailed in \cite{xavierLearningSolveLargeScale2019}, which, for the sake of brevity, is omitted here. The only modification is that the daily peak demand is obtained by drawing from $\rm{unif}(0.6, 1.4)\times0.6\times\sum_{g\in\gSet}\oli{p}_g$, where the limits were chosen more drastically than in \cite{xavierLearningSolveLargeScale2019} to allow for a more diverse set of scenarios.
An example of the 500 profiles of a single demand resulting from this sampling procedure is shown in Figure \ref{fig:load_profiles}. The colour represents the empirical probability density of each hourly demand taking on a value in the respective power interval. 
\begin{figure}[t]
    \centering
    \vspace{0mm}
    \includegraphics[width=\linewidth]{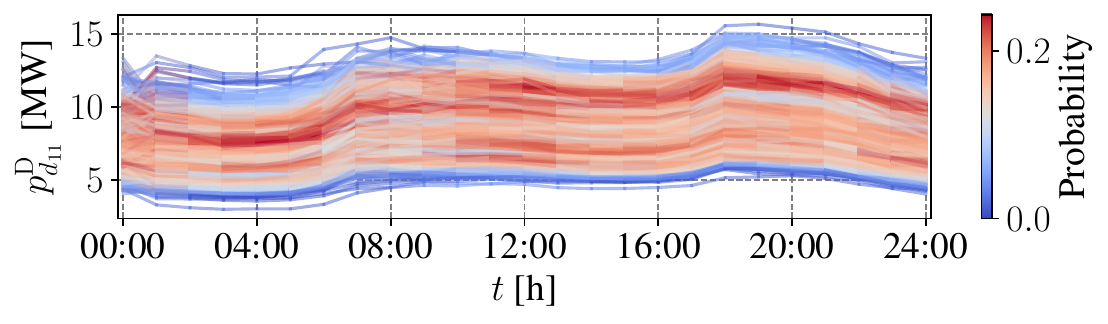}
    \vspace{-6mm}
    \caption{Hourly demand profile samples with empirical probability density for an exemplary bus in \texttt{case14}}
    \label{fig:load_profiles}
    \vspace{-3mm}
\end{figure}


\subsubsection{Question generation}\label{ssec:questions}
To effectively test the performance of the developed methods, a set of $Q=200$ questions for each test case is generated.
The process is aimed to resemble the real scenario in which a ``skip'' occurs, i.e., an out-of-merit dispatch decision (described in Section \ref{sec:intro}). The questions for DCOPF and UC are based on the definitions of the solution region $\mathcal{X}$ described in Assumption \ref{as:sol_reg_feas}. 

\textbf{DCOPF}: Based on \eqref{eq:X_DCOPF} we need to formulate a suitable minimum generator threshold $P_{g'}$ for each question. Among the 5,000 scenarios for each test case, we aim to find those that produced the most counterintuitive solutions. These are obtained by iterating over each dataset entry, where for a given sample $i\in\{1,\ldots,S\}$, we measure each unit's downward deviation from its mean dispatch value across the entire dataset, i.e., $p^{\gen}_{g,\rm{mean}} - p^{\gen[i]}_g$. If~there are no positive downward deviations (i.e., all generators are dispatched above their mean value), continue to $i+1$. Otherwise, choose $g'=\argmax_{g\in\gSet}\left\{p^{\gen}_{g,\rm{mean}} - p^{\gen[i]}_g\right\}$, i.e., the unit with the largest downward deviation from their mean value. The dispatch threshold is then defined as a random value between the actual dispatch level and the maximum value that was recorded for $g'$ within the entire dataset $\dataSet$, i.e., $P_{g'}\sim\rm{unif}\left(p^{\gen[i]}_{g'}, \max_{j\in[S]}\left\{p^{\gen[j]}_{g'}\right\}\right)$. Finally, since $S>Q$, pick $Q$ random samples from the candidate questions.

\textbf{Unit Commitment}: As described in \eqref{eq:X_UC} the solution region $\mathcal{X}$ is characterised only by the constraint $u_{g'\mkern-3mu,t'}=1$. 
The procedure for generating $Q=200$ questions is done in analogy to the DCOPF. Iterating over each sample $i\in\{1,\ldots,S\}$, the largest downward deviation of each generator's hourly commitment variable from the respective mean $u_{g,t,\rm{mean}}=1/{S}\sum_{j\in[S]} u_{g,t}^{[j]}$ is found. Then, the tuple $\{g',t'\}=\argmax_{g\in\gSet,t\in\tSet}\left\{u_{g,t,\rm{mean}} - u_{g,t}\right\}$ is picked, and  $u_{g'\mkern-3mu,t'}=1$ is added as a question candidate. Again, $Q$ random samples are drawn from the set of question candidates.


\subsection{Performance Metrics}

\paragraph{Speed}
The ability to quickly generate explanations would be highly desirable in real-life applications. If, for instance, a control room engineer is requested to justify their dispatch decision to affected stakeholders, a long delay could possibly add to the already existing time pressure in such environments. Furthermore, different generators may simultaneously ask for explanations, which could lead to a prohibitively long queue of CE generation problems. 

Let $T_{\rm{run}}$ denote the runtime. For CE problems addressing DCOPF instances, $T_{\rm{run}}$ is measured for all of the proposed methods by recording the wallclock time of the solver. In the case of the UC CE problems, $T_{\rm{run}}$ represents the total time required by the decomposition algorithm, and hence also includes the time of adding new cuts to the master problem. An upper time limit of 10 minutes is set on the solution time for both DCOPF and UC CE problems.

\paragraph{Minimality}
Based on the definition in \ref{sec:XOpt}, CEs are \textit{minimal} changes to the factual parameters such that a user-defined solution region yields optimal solutions.
We define two indicators to measure the minimality of a CE. For a given counterfactual demand scenario $\bfp^{\dem}_{\CE}$ the first indicator, termed \textit{peak-normalised distance} (PND), is computed as
    \begin{equation}\label{eq:delta_pnd}
        \Delta_{\rm{PND}}^{\dem}\left(\bfp^{\dem}_{\CE}\right) = \frac{\norm{\bfp^{\dem}_{\sz}-\bfp^{\dem}_{\CE}}_1}{P_{\max}^{\dem}}\times100\%
    \end{equation}
where $P_{\max}^{\dem}$ is the single maximum net system demand that occurred among all of the $S$ samples.
%
To further obtain a meaningful estimate of how each method performs w.r.t. a primitive benchmark method, let
\begin{equation}
    \Delta_{\knn}^{\dem}\left(\bfp^{\dem}_{\CE}\right)=\frac{\norm{\bfp^{\dem}_{\sz}-\bfp^{\dem}_{\CE}}_1}{\norm{\bfp^{\dem}_{\sz}-\bfp^{\dem}_{\knn}}_1}\times100\%
\end{equation}
where $\bfp^{\dem}_{\knn}=\min_{i\in[S]}\bigl\{\bigl\lVert\bfp^{\dem}_{\sz}-\bfp^{\dem[i]}\bigr\rVert_1\,\,:\,\,p_{g'}^{\gen[i]}>P_{g'}\bigr\}$ is the counterfactual demand variation which one would obtain when simply choosing the demand sample nearest to $\bfp^{\dem}_{\sz}$ that satisfies the solution requirement on $p^{\gen}_{g'}$ for the DCOPF and on $u_{g'}$ for the UC, i.e., carrying out a $k$-nearest neighbours look-up with $k=1$. After all, if the solutions to the algorithmic approaches lead to $\Delta_{\knn}^{\dem}$ values mostly close to 100\%, then their usage would be hardly justifiable, since the $k$NN1 method is much easier to implement. Evidently, the lower both $\Delta_{\rm{PND}}^{\dem}$ and $\Delta_{\knn}^{\dem}$ of a CE, the better the tested method performed on answering the respective question.

\subsection{Results}

Based on the above setup, the methods proposed in Section \ref{sec:bilo} were tested on their ability to quickly generate minimal CEs for dispatch decisions in the DCOPF and UC problems. 
All experiments were run on an Intel Core i7 CPU with a clock rate of 2.6 GHz and 32GB of RAM. DCOPF and UC optimisation problems were modelled using the \textit{Optimisation and Analysis Toolbox for Power Systems} \cite{bukhshOATSOptimisationAnalysis2020}, and then solved using {Gurobi} v11.0.3 \cite{gurobioptimizationGurobiOptimizerReference2024}.
%
\begin{figure}[t]
    \centering
    \vspace{0mm}
    \includegraphics[width=\linewidth,angle=0,trim=2mm 0mm 5mm 2mm]{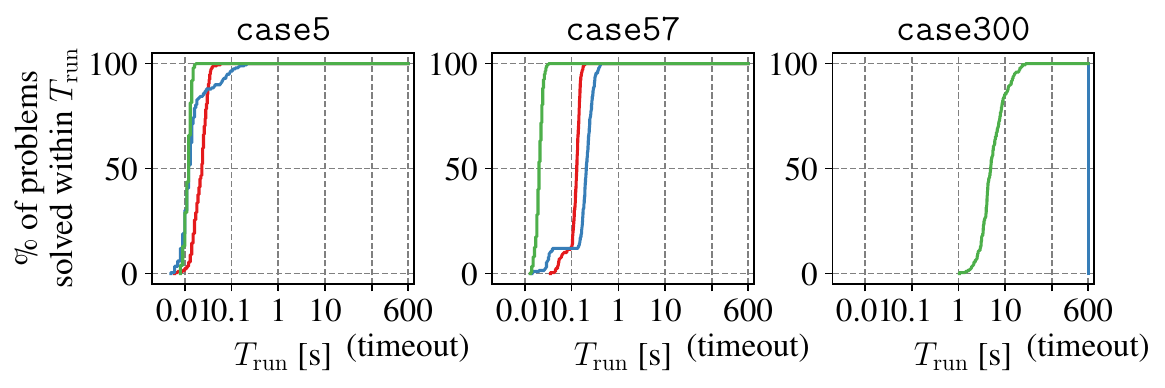}\\[2mm]
    \includegraphics[width=\linewidth,angle=0,trim=0mm 0mm 0mm 2mm]{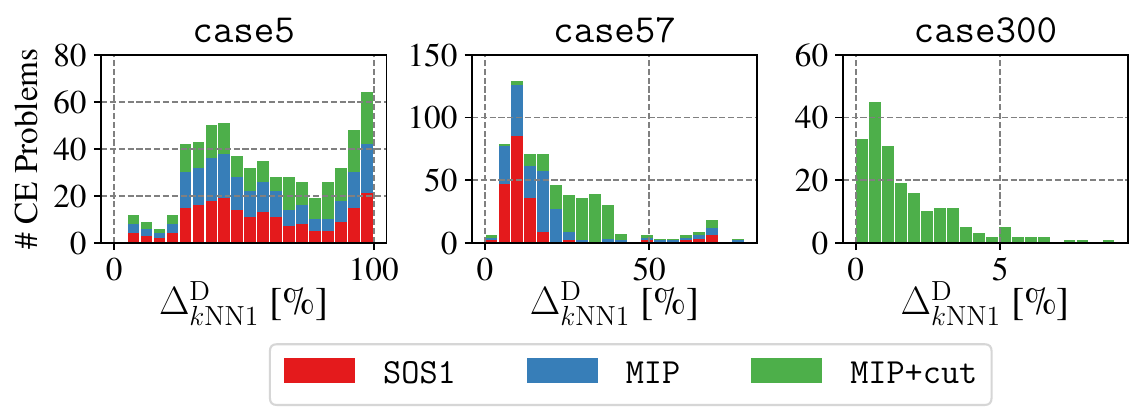}
    \vspace{-4mm}
    \caption{Cumulative proportions of runtimes and empirical distributions of peak-normalised distances ($\Delta_{\knn}^{\dem}$) of DCOPF CE problems on selected test cases}
    \label{fig:dcopf_PND}
    \vspace{-3mm}
\end{figure}
\subsubsection{Speed}
\textbf{DCOPF}: Figure \ref{fig:dcopf_PND} and Table \ref{tab:dcopf_performance} compare the runtimes of the three tested methods \texttt{SOS1}, \texttt{MIP} and \texttt{MIP+cut} on different DCOPF test cases. Note that, while all questions were generated in the same systematic fashion as described in Section \ref{ssec:questions}, the difficulty of solving a CE problem depends on both the factual scenario and the solution region defined by the user. The heterogeneity of the questions is what leads to different distribution shapes observable in the upper plots in Figure \ref{fig:dcopf_PND}, where runtimes for problems using the same method and test case range from a few seconds to several minutes.

Unsurprisingly, there is a trend of CE problems applied to larger networks leading to longer runtimes. While all methods maintain fast solution speed throughout the smallest 6 test cases, it is also found that, for larger networks, the exact solution via SOS1-constraints starts taking longer than desirable. This observation aligns well with findings from other studies exploring bilevel formulations with lower-level DCOPF problems (e.g., \cite{pinedaSolvingLinearBilevel2019}). For instance, for \texttt{case300}, neither \texttt{SOS1} nor \texttt{MIP} are able to converge within 10 minutes. As hoped, the cuts introduced by \texttt{MIP+cut} can notably boost the computational efficiency in such cases. Even at maximum runtime, all solutions can be obtained within a fraction of the 10-minute time limit.
\begin{figure}[t]
    \centering
    \vspace{0mm}
    \includegraphics[width=\linewidth,angle=0,trim=2mm 0mm 5mm 2mm]{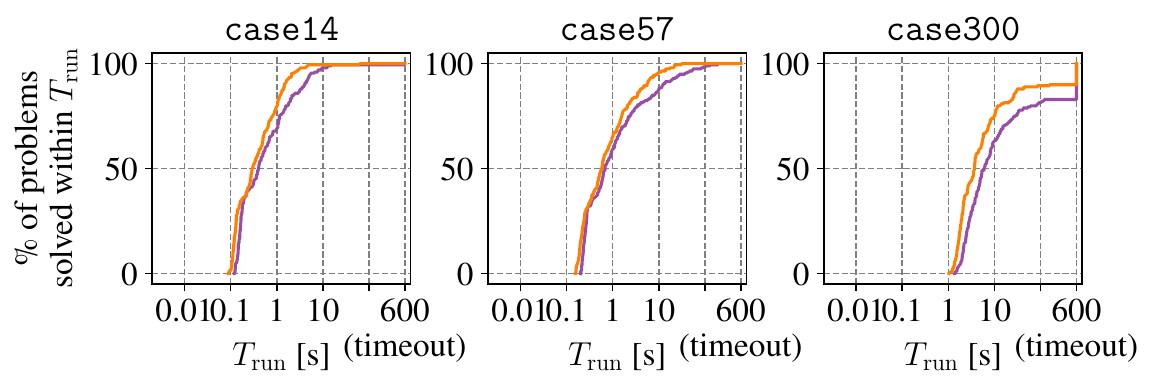}\\[2mm]
    \includegraphics[width=0.98\linewidth,angle=0,trim=0mm 0mm 0mm 3mm]{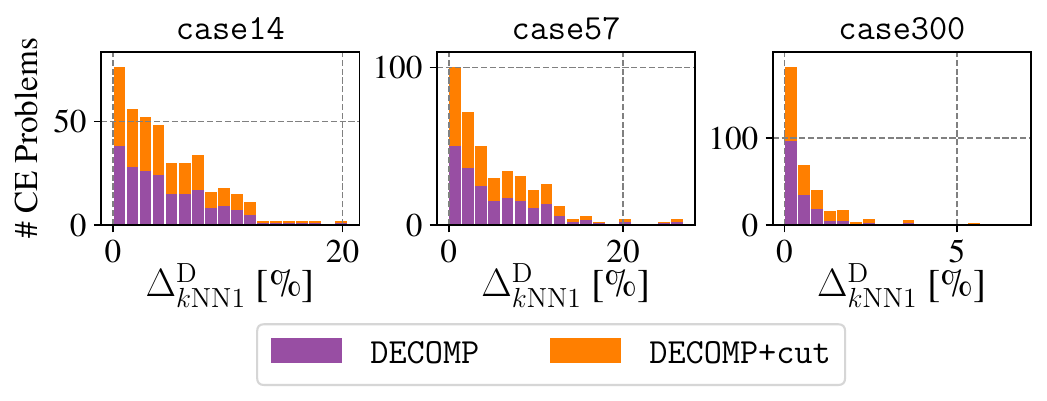}
    \vspace{-0.5mm}
    \caption{Cumulative proportions of runtimes and empirical distributions of peak-normalised distances ($\Delta_{\knn}^{\dem}$) of UC CE problems on selected test cases}
    \label{fig:uc_PND}
    \vspace{-5mm}
\end{figure}

\textbf{Unit Commitment}: Runtimes for the 200 CE generation problems for different UC problems are presented in Figure \ref{fig:uc_PND} and Table \ref{tab:uc_performance}. As in the DCOPF, we observe that larger UC instances generally result in computationally more demanding CE problems. Nevertheless, there are some compelling differences compared to the DCOPF CE problems. 

Firstly, despite the difficulties imposed by the presence of binary variables in the UC problem, it can be seen that even for the 300-bus system, more than 80\% of all problems still diverge to an optimal CE solution within the 10-minute time limit for both exact (\texttt{DECOMP}) and data-assisted (\texttt{DECOMP+cut}) methods. This is different from the DCOPF problem where, for the same 300-bus network, the exact \texttt{SOS1} method cannot produce any optimal CE solutions within the time limit. A plausible explanation lies in the way the UC problem is formulated. Since zonal transmission constraints are not considered, there is no need to introduce mutable demand parameters at every node. As such, the number of variables appearing in the objective function of the CE problem is agnostic to the size of the network. Specifically, the distance term to be minimised only contains the up- and downwards deviations of the regionally aggregated demand over the given time period $\tSet$ (24 hours). Certainly, a more detailed UC formulation requiring the minimisation of all \textit{nodal} demand deviations may thus quickly render the decomposition algorithm intractable.

A second insight is that the speed improvements added by the data-informed cuts in \texttt{DECOMP+cut} are not as substantial as it was the case for \texttt{MIP+cut}. For instance, the 1.25x performance gain achieved via the data-driven method for \texttt{case89} is rather marginal. Further studies need to be carried to better interpret this behaviour, in particular with respect to the mathematical structure of the decomposition algorithm itself. A key factor could be the iterative nature of the decomposition algorithm, alternating between solving the master problem and the UC subproblem. While the cuts allow to solve both problems more efficiently, there is no reason to expect that the cuts will consistently reduce the number of iterations needed.
Nevertheless, \texttt{DECOMP+cut} still manages to converge to a CE on average 3.77\% more often than \texttt{DECOMP} and may thus be used as a fall-back option in case \texttt{DECOMP} does not converge within the given time limit for larger test cases.

%

\begingroup
\setlength{\tabcolsep}{4pt}
\begin{table}[t]
   \caption{Statistics of solving 200 DCOPF CE problems on various test cases with different methods (best methods in \textbf{boldface})}
    \label{tab:dcopf_performance}
    \begin{center}
   \vspace{-2mm}
    \resizebox{0.99\linewidth}{!}{
        \begin{tabular}{m{1.1cm}lHHH|l|HHHHll|Hll|HllH}
            \toprule
            \multirow[c]{2}{*}{\textbf{Test case}} & \multirow[c]{2}{*}{\textbf{Method}} &  \# CEs & \multirow[c]{2}{*}{\textbf{\# inf}} & \multirow[c]{2}{*}{\textbf{\# Timeout}} & \multirow[c]{2}{*}{\textbf{\# Succ.}} & \multirow[c]{2}{*}{\textbf{\% inf}} & \multirow[c]{2}{*}{\textbf{\shortstack[l]{\% time-\\out}}} & \% Success & & \multicolumn{2}{c|}{$T_{\rm{run}}$ [s]} & & \multicolumn{2}{c|}{$\Delta_{\rm{PND}}^{\dem}$ [\%]} & & \multicolumn{2}{c}{$\Delta_{\knn}^{\dem}$ [\%]} & \\[0.5mm]
            & & & & & & & & & min & \multicolumn{1}{c}{mean} & \multicolumn{1}{c|}{max} & {min} & \multicolumn{1}{c}{mean} & \multicolumn{1}{c|}{max} & min & \multicolumn{1}{c}{mean} & \multicolumn{1}{c}{max} & mean ADI [MW] \\\midrule
            \multirow[c]{3}{*}{\texttt{case5}} & \texttt{SOS1} & 200 & 0 & 0 & 200 & 0.00 & 0.00 & 100.00 & 0.006 & 0.023 & 0.107 & \textbf{0.125} & \textbf{9.618} & \textbf{38.327} & \textbf{6.016} & \textbf{57.027} & \textbf{99.873} & 5.955 \\
            & \texttt{MIP} & 200 & 0 & 0 & 200 & 0.00 & 0.00 & 100.00 & \textbf{0.005} & 0.023 & 0.252 & \textbf{0.125} & \textbf{9.618} & \textbf{38.327} & \textbf{6.016} & \textbf{57.027} & \textbf{99.873} & 5.955 \\
            & \texttt{MIP+cut} & 200 & 0 & 0 & 200 & 0.00 & 0.00 & 100.00 & 0.008 & \textbf{0.012} & \textbf{0.017} & \textbf{0.125} & 11.211 & \textbf{38.327} & \textbf{6.016} & 62.724 & \textbf{99.873} & \textbf{4.362} \\\midrule
            \multirow[c]{3}{*}{\texttt{case14}} & \texttt{SOS1} & 200 & 0 & 0 & 200 & 0.00 & 0.00 & 100.00 & 0.005 & 0.02 & 0.042 & \textbf{0.1} & \textbf{18.963} & \textbf{56.725} & \textbf{0.901} & \textbf{72.651} & \textbf{99.891} & 6.422 \\
            & \texttt{MIP} & 200 & 0 & 0 & 200 & 0.00 & 0.00 & 100.00 & 0.007 & 0.011 & 0.017 & \textbf{0.1} & \textbf{18.963} & \textbf{56.725} & \textbf{0.901} & \textbf{72.651} & \textbf{99.891} & 6.422 \\
            & \texttt{MIP+cut} & 200 & 0 & 0 & 200 & 0.00 & 0.00 & 100.00 & \textbf{0.003} & \textbf{0.007} & \textbf{0.012} & \textbf{0.1} & 19.671 & \textbf{56.725} & \textbf{0.901} & 74.57 & \textbf{99.891} & \textbf{5.713} \\\midrule
           \multirow[c]{3}{*}{\texttt{case24}} & \texttt{SOS1} & 200 & 0 & 0 & 200 & 0.00 & 0.00 & 100.00 & \textbf{0.104} & \textbf{1.65} & \textbf{10.056} & \textbf{0.0} & \textbf{2.08} & \textbf{11.361} & \textbf{0.0} & \textbf{8.815} & \textbf{31.962} & 19.424 \\
            & \texttt{MIP} & 200 & 0 & 0 & 200 & 0.00 & 0.00 & 100.00 & 0.153 & 3.763 & 20.262 & \textbf{0.0} & 2.217 & 13.542 & \textbf{0.0} & 9.264 & 38.439 & \textbf{19.288} \\
            & \texttt{MIP+cut} & 200 & 0 & 0 & 200 & 0.00 & 0.00 & 100.00 & 0.172 & 3.771 & 15.881 & \textbf{0.0} & 2.217 & 13.542 & \textbf{0.0} & 9.264 & 38.439 & \textbf{19.288} \\\midrule
           \multirow[c]{3}{*}{\texttt{case30}} & \texttt{SOS1} & 200 & 0 & 0 & 200 & 0.00 & 0.00 & 100.00 & 0.009 & 0.056 & 0.085 & \textbf{0.291} & \textbf{5.805} & \textbf{12.219} & \textbf{1.978} & \textbf{18.992} & \textbf{60.926} & 28.287 \\
            & \texttt{MIP} & 200 & 0 & 0 & 200 & 0.00 & 0.00 & 100.00 & 0.017 & 0.057 & 0.102 & \textbf{0.291} & 18.123 & 33.175 & \textbf{1.978} & 51.098 & 78.098 & 15.969 \\
            & \texttt{MIP+cut} & 200 & 0 & 0 & 200 & 0.00 & 0.00 & 100.00 & \textbf{0.006} & \textbf{0.009} & \textbf{0.017} & \textbf{0.291} & 21.098 & 55.383 & \textbf{1.978} & 56.5 & 81.433 & \textbf{12.994} \\\midrule
           \multirow[c]{3}{*}{\texttt{case39}} & \texttt{SOS1} & 200 & 0 & 0 & 200 & 0.00 & 0.00 & 100.00 & 0.088 & 0.647 & 1.727 & \textbf{0.071} & \textbf{5.398} & \textbf{16.247} & \textbf{0.541} & \textbf{21.058} & \textbf{51.983} & 18.308 \\
            & \texttt{MIP} & 200 & 0 & 0 & 200 & 0.00 & 0.00 & 100.00 & 0.167 & 1.309 & 5.376 & \textbf{0.071} & 5.485 & \textbf{16.247} & \textbf{0.541} & 21.307 & \textbf{51.983} & 18.221 \\
            & \texttt{MIP+cut} & 200 & 0 & 0 & 200 & 0.00 & 0.00 & 100.00 & \textbf{0.039} & \textbf{0.246} & \textbf{1.084} & \textbf{0.071} & 5.881 & 18.141 & \textbf{0.541} & 22.58 & \textbf{51.983} & \textbf{17.825} \\\midrule
            \multirow[c]{3}{*}{\texttt{case57}} & \texttt{SOS1} & 200 & 0 & 0 & 200 & 0.00 & 0.00 & 100.00 & 0.035 & 0.124 & 0.24 & \textbf{0.204} & \textbf{4.313} & \textbf{22.385} & \textbf{1.097} & \textbf{14.752} & \textbf{78.902} & 28.308 \\
            & \texttt{MIP} & 200 & 0 & 0 & 200 & 0.00 & 0.00 & 100.00 & 0.014 & 0.205 & 0.433 & \textbf{0.204} & 5.731 & \textbf{22.385} & \textbf{1.097} & 18.402 & \textbf{78.902} & 26.89 \\
            & \texttt{MIP+cut} & 200 & 0 & 0 & 200 & 0.00 & 0.00 & 100.00 & \textbf{0.013} & \textbf{0.021} & \textbf{0.033} & \textbf{0.204} & 10.219 & 23.124 & \textbf{1.097} & 30.519 & \textbf{78.902} & \textbf{22.402} \\\midrule
            \multirow[c]{3}{*}{\texttt{case89}} & \texttt{SOS1} & 200 & 0 & 0 & 200 & 0.00 & 0.00 & 100.00 & 1.897 & 5.908 & 14.578 & \textbf{0.063} & \textbf{2.652} & \textbf{8.307} & \textbf{0.247} & \textbf{10.14} & \textbf{31.117} & 23.588 \\
            & \texttt{MIP} & 200 & 0 & 0 & 200 & 0.00 & 0.00 & 100.00 & 1.328 & 18.274 & 45.693 & \textbf{0.063} & 2.678 & 9.301 & \textbf{0.247} & 10.23 & 34.84 & 23.562 \\
            & \texttt{MIP+cut} & 200 & 0 & 0 & 200 & 0.00 & 0.00 & 100.00 & \textbf{0.071} & \textbf{0.321} & \textbf{3.859} & \textbf{0.063} & 2.682 & 9.575 & \textbf{0.247} & 10.242 & 35.865 & \textbf{23.558} \\\midrule
           \multirow[c]{3}{*}{\texttt{case118}} & \texttt{SOS1} & 200 & 0 & 47 & 153 & 0.00 & 23.50 & 76.50 & 3.423 & 62.66 & 198.834 & 0.015 & \textbf{0.923} & \textbf{3.595} & 0.043 & \textbf{2.418} & \textbf{8.857} & 36.451 \\
            & \texttt{MIP} & 200 & 0 & 24 & 176 & 0.00 & 12.00 & 88.00 & 2.521 & 70.405 & 197.25 & \textbf{0.0} & 1.085 & 3.841 & \textbf{0.0} & 2.83 & 9.447 & \textbf{36.382} \\
            & \texttt{MIP+cut} & 200 & 0 & 0 & 200 & 0.00 & 0.00 & 100.00 & \textbf{0.8} & \textbf{6.015} & \textbf{23.328} & \textbf{0.0} & 1.34 & 5.441 & \textbf{0.0} & 3.42 & 13.21 & 36.538 \\\midrule
           \multirow[c]{3}{*}{\texttt{case300}} & \texttt{SOS1} & 200 & 0 & 200 & 0 & 0.00 & 100.00 & 0.00 & - & - & - & - & - & - & - & - & - & - \\
            & \texttt{MIP} & 200 & 0 & 200 & 0 & 0.00 & 100.00 & 0.00 & - & - & - & - & - & - & - & - & - & - \\
            & \texttt{MIP+cut} & 200 & 0 & 0 & 200 & 0.00 & 0.00 & 100.00 & \textbf{1.051} & \textbf{6.596} & \textbf{29.134} & \textbf{0.012} & \textbf{0.681} & \textbf{3.582} & \textbf{0.033} & \textbf{1.771} & \textbf{8.979} & \textbf{37.192} \\
            \bottomrule
           \end{tabular}         
    }
    \end{center}
   \vspace{-3mm}
\end{table}
\endgroup

\paragraph{Minimality}
\textbf{DCOPF}: The minimality of CEs for the DCOPF are compared in Figure \ref{fig:dcopf_PND} and Table \ref{tab:dcopf_performance}. We observe that solutions to CE problems for larger networks are generally characterised by smaller $\Delta_{\rm{PND}}^{\dem}$ and $\Delta_{\knn}^{\dem}$. For $\Delta_{\rm{PND}}^{\dem}$, the explanation is intuitively linked to the fact that larger networks accumulate higher peak demands $P_{\max}^{\dem}$ which acts as the normalising factor in \eqref{eq:delta_pnd}. In the case of $\Delta_{\knn}^{\dem}$, this trend indicates that the primitive $k$NN1 counterfactual demand scenario cannot serve as an accurate surrogate for the true, minimal CE, especially for larger test cases. Table~\ref{tab:dcopf_performance} confirms that the solutions obtained from using \texttt{SOS1} can serve as a lower bound for all other methods. While for most cases, the CEs obtained from all 3 methods are of similar magnitudes, \texttt{case30} demonstrates a potential issue with the MIP-based methods. It can be suspected that certain big-$M$ values were chosen too tight, shrinking the dual feasibility space too aggressively and cutting off the optimal CE solution.
\begingroup
\setlength{\tabcolsep}{4pt}
\begin{table}[t]
    \caption{Statistics of solving 200 UC CE problems on various test cases with different methods (best methods in \textbf{boldface})}
    \label{tab:uc_performance}
    \begin{center}
   \vspace{-2mm}
    \resizebox{0.99\linewidth}{!}{
        \begin{tabular}{m{1.1cm}lHHH|l|HHHHll|HHll|HllHH}
            \toprule
            \multirow[c]{2}{*}{\textbf{Test case}} & \multirow[c]{2}{*}{\textbf{Method}} &  \# CEs & \multirow[c]{2}{*}{\textbf{\# inf}} & \multirow[c]{2}{*}{\textbf{\# Timeout}} & \multirow[c]{2}{*}{\textbf{\# Succ.}} & \multirow[c]{2}{*}{\textbf{\% inf}} & \multirow[c]{2}{*}{\textbf{\shortstack[l]{\% time-\\out}}} & \% Success & & \multicolumn{2}{c|}{$T_{\rm{run}}$ [s]} & \multirow[c]{2}{*}{\textbf{\shortstack[l]{\# iter.\\(mean)}}} & & \multicolumn{2}{c|}{$\Delta_{\rm{PND}}^{\dem}$ [\%]} & & \multicolumn{2}{c}{$\Delta_{\knn}^{\dem}$ [\%]} & & \\[0.5mm]
            & & & & & & & & & {min} & \multicolumn{1}{c}{mean} & \multicolumn{1}{c|}{max} & & {min} & \multicolumn{1}{c}{mean} & \multicolumn{1}{c|}{max} & {min} & \multicolumn{1}{c}{mean} & \multicolumn{1}{l}{max} & & \\\midrule
            \multirow[c]{2}{*}{\texttt{case14}} & \texttt{DECOMP} & 200 & 0 & 2 & 198 & 0.00 & 1.00 & 99.00 & 0.119 & 7.269 & 600.0 & \textbf{2.934} & \textbf{0.024} & \textbf{1.743} & \textbf{7.095} & \textbf{0.084} & \textbf{4.623} & \textbf{20.447} & 3.35 & 1.52 \\
            & \texttt{DECOMP+cut} & 200 & 0 & 0 & 200 & 0.00 & 0.00 & 100.00 & \textbf{0.089} & \textbf{1.452} & \textbf{106.227} & 3.42 & \textbf{0.024} & 1.784 & \textbf{7.095} & \textbf{0.084} & 4.696 & \textbf{20.447} & 3.33 & 1.52 \\\midrule
            \multirow[c]{2}{*}{\texttt{case30}} & \texttt{DECOMP} & 200 & 0 & 0 & 200 & 0.00 & 0.00 & 100.00 & 0.142 & 1.543 & 82.854 & \textbf{2.475} & \textbf{0.012} & \textbf{1.58} & \textbf{9.094} & \textbf{0.037} & \textbf{4.449} & \textbf{19.142} & 4.24 & 1.59 \\
            & \texttt{DECOMP+cut} & 200 & 0 & 0 & 200 & 0.00 & 0.00 & 100.00 & \textbf{0.107} & \textbf{0.507} & \textbf{7.312} & \textbf{2.475} & \textbf{0.012} & \textbf{1.58} & \textbf{9.094} & \textbf{0.037} & \textbf{4.449} & \textbf{19.142} & 4.24 & 1.59 \\\midrule
            \multirow[c]{2}{*}{\texttt{\texttt{case57}}} & \texttt{DECOMP} & 200 & 0 & 1 & 199 & 0.00 & 0.50 & 99.50 & 0.202 & 9.582 & 600.0 & \textbf{4.221} & \textbf{0.027} & \textbf{2.397} & \textbf{12.535} & \textbf{0.076} & \textbf{5.37} & \textbf{26.942} & 4.04 & 1.82 \\
            & \texttt{DECOMP+cut} & 200 & 0 & 0 & 200 & 0.00 & 0.00 & 100.00 & \textbf{0.156} & \textbf{2.414} & \textbf{80.508} & 4.37 & \textbf{0.027} & 2.41 & \textbf{12.535} & \textbf{0.076} & 5.387 & \textbf{26.942} & 4.03 & 1.82 \\\midrule
            \multirow[c]{2}{*}{\texttt{case89}} & \texttt{DECOMP} & 200 & 3 & 19 & 181 & 1.50 & 9.50 & 90.50 & \textbf{0.302} & 61.024 & \textbf{600.0} & \textbf{8.26} & \textbf{0.004} & \textbf{1.076} & \textbf{5.251} & \textbf{0.008} & \textbf{3.037} & \textbf{16.06} & 6.55 & 2.02 \\
            & \texttt{DECOMP+cut} & 200 & 4 & 13 & 187 & 2.00 & 6.50 & 93.50 & 0.32 & \textbf{48.734} & \textbf{600.0} & 9.535 & \textbf{0.004} & 1.266 & 13.241 & \textbf{0.008} & 3.437 & 26.436 & 6.57 & 2.02 \\\midrule
            \multirow[c]{2}{*}{\texttt{\texttt{case118}}} & \texttt{DECOMP} & 200 & 0 & 10 & 190 & 0.00 & 5.00 & 95.00 & 1.6 & 39.574 & \textbf{600.0} & 2.684 & \textbf{0.001} & \textbf{0.145} & \textbf{1.648} & \textbf{0.003} & \textbf{0.448} & \textbf{4.397} & 17.90 & 6.36 \\
            & \texttt{DECOMP+cut} & 200 & 0 & 7 & 193 & 0.00 & 3.50 & 96.50 & \textbf{1.258} & \textbf{29.609} & \textbf{600.0} & \textbf{2.58} & \textbf{0.001} & 0.769 & 5.718 & 0.005 & 2.417 & 20.815 & 18.73 & 6.77 \\\midrule
           \multirow[c]{2}{*}{\texttt{\texttt{case300}}} & \texttt{DECOMP} & 200 & 0 & 38 & 162 & 0.00 & 19.00 & 81.00 & 1.373 & 112.033 & \textbf{600.0} & \textbf{1.976} & \textbf{0.001} & \textbf{0.172} & \textbf{2.375} & \textbf{0.002} & \textbf{0.524} & \textbf{5.643} & 16.84 & 7.67 \\
            & \texttt{DECOMP+cut} & 200 & 12 & 17 & 183 & 6.00 & 8.50 & 91.50 & \textbf{1.023} & \textbf{24.769} & \textbf{600.0} & 3.071 & 0.002 & 0.302 & 3.638 & 0.004 & 0.922 & 13.057 & 16.49 & 6.00 \\
            \bottomrule
           \end{tabular}         
    }
    \end{center}
\end{table}
\endgroup

\textbf{Unit Commitment}: 
The results for CE problems on the 6 studied test cases are shown in Figure \ref{fig:uc_PND} and Table \ref{tab:uc_performance}. As for the DCOPF, it is observed that most CE problems can be solved applying small demand variations. The presented statistics for $\Delta_{\rm{PND}}^{\dem}$ indicate that variations are on average less than 2\% of the peak system load for both \texttt{DECOMP} and \texttt{DECOMP+cut}. Again, high values for $\Delta_{\knn}^{\dem}$ occur less often for larger test cases, as shown in Figure \ref{fig:uc_PND}, justifying the usage of algorithmic approaches over the primitive $k$NN1 approach. While the performance worsening (distance increase) introduced by \texttt{DECOMP+cut} compared to \texttt{DECOMP} across all CE problems on all test cases is at most 30.83 times higher (for {\texttt{case300}}), the average increase factor is only 1.126.

\section{Conclusion}\label{sec:conclusion}

This paper develops an explainability framework based on counterfactual explanations (CEs), aimed at explaining dispatch decisions suggested by an optimisation solver for the DC Optimal Power Flow (DCOPF) and the Unit Commitment (UC) problem. We focus on answering why a certain generator was not dispatched as expected by revealing how the electric demand pattern would need to be changed such that a more usual dispatch configuration was restored. To derive CEs for optimisation problems we rely on bilevel problem formulations, for which we compare various solution techniques w.r.t. their computational speed and optimality. For the DCOPF, it is shown that obtaining exact CEs gets increasingly challenging for larger networks with many decision variables. In such cases, a proposed data-driven heuristic can yield significant overall speed improvements with little sacrifices on the minimality of the obtained explanations. Despite the presence of binary variables in the UC problem, CEs can be found within comparable timeframes as for the DCOPF, facilitated by an iterative decomposition algorithm. 

Future work could study the impact of additional security constraints (e.g., N--1) on the explainability of solutions. For the UC problem one could investigate the inclusion of zonal transmission constraints, reserve requirements, or piece-wise production cost curves. Finally, an important step towards a more practical usage of this framework, would be to integrate it into recently emerging natural language processing tools such as \cite{chenOptiChatBridgingOptimization2025}, where users can directly interact with the optimisation software through a dialogue-based interface.

{\appendix

\begin{proof}[Proof of Proposition \ref{proposition}]
    For ease of notation, we define $\Delta\bf{x}=\bf{x}-\bf{x}_{\sz}$ (the element-wise variation of a generic vector $\bf{x}$ from its factual values $\bf{x}_{\sz}$).
    
    We start with the DCOPF case. Every CE problem can be reduced into an inverse optimisation problem \cite{ahujaInverseOptimization2001}, if a full optimal solution is specified, i.e., $\mathcal{X}$ becomes singular. For the DCOPF, with a fully defined solution $\bfp_{\CE}^{\gen}$, the objective function value $\norm{\Delta\bfp^{\dem}}_1$ will be the same for the CE and inverse problem. It is thus sufficient to continue the proof with the latter. The inverse problem is given as
    \begin{subequations}\label{eq:proof}
        \begin{alignat}{3}
            &\min_{\Delta\bfp^{\dem}} \mkern5mu& & \norm{\Delta\bfp^{\dem}}_1\\
            & \text{s.t.} & & \mkern-12mu\sum_{\{n,m\}\in\eSet_n} \mkern-18mu\Delta p^{\line}_{nm} = \mkern-6mu\sum_{g\in\gSet_n}\mkern-6mu\Delta p^{\gen}_g - \mkern-9mu\sum_{d\in\dSet_n}\mkern-6mu\Delta p^{\dem}_d \mkern5mu,\mkern5mu\forall n \in \nSet \mkern12mu({\bm{\eta}}^{\rm{KCL}})\label{seq:proof1}\\
            & & & -\oli{\bfp}^{\line} \leq \bfp^{\line}_{\sz} + \Delta\bfp^{\line} \leq \oli{\bfp}^{\line} \quad(\uli{\bm{\eta}}^{\line},\oli{\bm{\eta}}^{\line}) \label{seq:proof2}\\
            & & & c+\lambda^{\KCL}_{n_g}+\oli{\lambda}^{\gen}\geq0\mkern5mu,\mkern5mu\forall g \in \gSet\quad({\bm{\eta}}^{\gen,\rm{stat}})\label{seq:proof3}\\
            & & & (\ldots)\,,\notag
        \end{alignat}
    \end{subequations}
    where \eqref{seq:proof1}--\eqref{seq:proof2} ensure the feasibility of the flows resulting from $\Delta\bfp^{\dem}$ and \eqref{seq:proof3} is the stationarity constraint w.r.t. $\bfp^{\gen}$. $(\ldots)$ represents the remaining DCOPF and KKT constraints, which are not needed for completion of the proof. Deriving the stationarity condition for $\Delta\bfp^{\dem}$ in problem \eqref{eq:proof}, we can find that $-1\leq\bm{\eta}^{\rm{KCL}}\leq1$.
    Then, from the strong duality theorem applied to \eqref{eq:proof}, we get
    \begin{equation}
        \begin{aligned}
            \norm{\Delta\bfp^{\dem}}_1 = &- \sum_{n\in\nSet}\Bigl(\sum_{g\in\gSet_n}\Delta p^{\gen}_g\Bigr)\underbrace{\eta_n^{\rm{KCL}}}_{\geq -1} \\
            & - \sum_{\{n,m\}\in\eSet}\underbrace{\left(\oli{p}^{\line}_{nm}+{p}^{\line}_{nm,0}\right)\uli{\eta}_{nm}^{\line}}_{\geq0} \\
            & - \sum_{\{n,m\}\in\eSet}\underbrace{\left(\oli{p}^{\line}_{nm}-{p}^{\line}_{nm,0}\right)\oli{\eta}_{nm}^{\line}}_{\geq0} \\
            & - \sum_{g\in\gSet}\underbrace{c_g\eta_g^{\gen,\rm{stat}}}_{\geq0}\\
            &\mkern-23mu\leq\sum_{g\in\gSet}\Delta p^{\gen}_g\leq\bigl\lVert\Delta\bfp^{\gen}\bigr\rVert_1\,.
        \end{aligned}
    \end{equation}

    For the UC problem, the proof is straightforward. By constraint \eqref{seq:uc_p_balance_constr}, power balance holds in each time step, i.e., $\Delta P^{\dem}_t\equiv\sum_{d\in\dSet}\Delta p^{\dem}_{d,t}= \sum_{g\in\gSet}\Delta p^{\gen}_{g,t}\Leftrightarrow |\Delta P^{\dem}_t| = \bigl|\sum_{g\in\gSet}\Delta p^{\gen}_{g,t}\bigr|$. It follows that $\bigl\lVert\Delta\bfp^{\dem}\bigr\rVert_1\leq\bigl\lVert\Delta\bfp^{\gen}\bigr\rVert_1$.
\end{proof}

}


\bibliographystyle{IEEEtran}
\bibliography{IEEEabrv,bib}

\vspace{-33pt}

\begin{IEEEbiographynophoto}
  {Benjamin Fritz} (Student Member, IEEE) received his M.Sc. in Electrical Power Systems from the University of Strathclyde, U.K., in 2025. He is currently a PhD student with the Dyson School of Design Engineering, Imperial College London, U.K. His research interests lie at the intersection of electricity systems and computational decision-making, focussing on areas such as optimisation and machine learning for power system operation, as well as transparency and fairness issues in consumer-centric electricity markets.
\end{IEEEbiographynophoto}
\vspace{-32pt}
\begin{IEEEbiographynophoto}
  {Waqquas Bukhsh} (Senior Member, IEEE) received the B.S. degree in mathematics from COMSATS University Islamabad, Pakistan, in 2008, and the Ph.D. degree in operational research from The University of Edinburgh, U.K., in 2014. He is currently a Senior Lecturer in advanced power systems optimisation with the Department of Electronic and Electrical Engineering, University of Strathclyde, U.K. He works closely with GB's power industry, such as the National Energy System Operator (NESO), to develop decision-support tools for operation and planning. His research interests include model building and numerical optimisation and its application to electricity systems.
\end{IEEEbiographynophoto}



\vfill

\end{document}